\DeclareRobustCommand\binom[2]{{\begingroup #1\endgroup \choose #2}}
\newcommand*{\dmax}{\ensuremath{d_\mathrm{max}}}
\newcommand*{\myref}[2]{\hyperref[#2]{#1~\ref*{#2}}}
\newcommand*{\myrefp}[3]{\hyperref[#2]{#1~\ref*{#2}#3}}
\theoremstyle{plain}
\newtheorem{thm}{Theorem}
\newtheorem{lem}[thm]{Lemma}
\theoremstyle{definition}
\newtheorem{defn}{Definition}
\begin{document}

\title
[Connectedness matters: Sampling connected graphs with a given degree sequence]
{Connectedness matters: Construction and exact random sampling of connected networks}

\author{Szabolcs~Horv\'{a}t$^{1,2,3}$ and Carl~D~Modes$^{1,2,4}$}

\address{$^1$ 
Max Planck Institute for Cell Biology and Genetics,
Pfotenhauerstra\ss{}e~108, 01307~Dresden, Germany}

\address{$^2$ 
Center for Systems Biology Dresden,
Pfotenhauerstra\ss{}e~108, 01307~Dresden, Germany}

\address{$^3$ 
Max Planck Institute for the Physics of Complex Systems,
N\"othnitzerstra\ss{}e~38, 01187~Dresden, Germany}

\address{$^4$ 
Physics of Life, Biotechnology Center of the TU Dresden,
Tatzberg~47/49, 01307~Dresden, Germany}

\eads{horvat@mpi-cbg.de, modes@mpi-cbg.de}
\vspace{10pt}
\begin{indented}
\item[]30 November 2020
\end{indented}

\begin{abstract}
We describe a new method for the random sampling of connected networks with a specified degree sequence. We consider both the case of simple graphs and that of loopless multigraphs. The constraints of fixed degrees and of connectedness are two of the most commonly needed ones when constructing null models for the practical analysis of physical or biological networks. Yet handling these constraints, let alone combining them, is non-trivial. Our method builds on a recently introduced novel sampling approach that constructs graphs with given degrees independently (unlike edge-switching Markov Chain Monte Carlo methods) and efficiently (unlike the configuration model), and extends it to incorporate the constraint of connectedness. Additionally, we present a simple and elegant algorithm for directly constructing a single connected realization of a degree sequence, either as a simple graph or a multigraph. Finally, we demonstrate our sampling method on a realistic scale-free example, as well as on degree sequences of connected real-world networks, and show that enforcing connectedness can significantly alter the properties of sampled networks.
\end{abstract}

\vspace{2pc}
\noindent{\it Keywords}: graph theory, network science, null models, random sampling, degree sequence, connectedness, graph construction, algorithms.
\submitto{{\it J.\ Phys.:\ Complexity}}
%
%
%

\section{Introduction}

From the active scaffolding of actomyosin in the cell's cortex to the underlying gene expression machinery that regulates it, from the neighbourhood interactions of grains in a sand pile to those of the engineered struts and cables in a suspension bridge, and from the flow of virtual traffic on the internet to, critically in the time of COVID-19, the web of 
contacts that allow the spread of viral contagion, network structures underlie the vast majority of sufficiently complex real-world systems. Unsurprisingly, then, a great deal of focus has been placed on the furtherance of our understanding of how these network structures affect and ultimately determine the physical, biological, and social phenomena that play out over them. Indeed, the explosive growth of the fields of network science and complexity science in the last two decades is a direct consequence of this focus.

As is to be expected in such a young field, however, there remain fundamental challenges. One such challenge is the surprising difficulty of translating the simple concept of the null hypothesis into a network setting. Done directly, such a translation would read: ``There is no relationship between the network structure or properties and the observed or measured phenomena of interest.'' But of course one cannot simply compare the case of phenomena potentially arising from some specific network structure with a case of \emph{no network at all}, forcing one to conclude that the correct operational statement of the null hypothesis in the complex network milieu must be: ``There would be no difference in the observed phenomena or measured output if the specific underlying network were to be replaced by a generic network.'' And herein lies the rub. What is a \emph{generic} network? Surely one can demand that the generic network---or ensemble of generic networks---satisfy some small set of constraints in order to ensure relevance to the biology, physics, or social dynamics under consideration. For example, in epidemiological viral spreading models it would be of no use to consider a heavily disconnected network with many small individual components to be among the generic networks. In fact, it is the unfortunate state of affairs that this simple issue is so tricky that many network and complexity science results are reported and accepted without reference to a test of the null hypothesis! But before we can fruitfully return to the question of membership amongst the relevant generic networks, we must first briefly discuss the problem of sampling from constrained ensembles of networks.

Indeed, the so-called \emph{random graph models} are among the most powerful tools of network science. Essentially, a random graph model is simply a probability distribution defined over a set $\mathcal{G}$ of graphs, also referred to as a graph ensemble. Often, such models are defined through an explicit stochastic graph construction process: the Watts--Strogatz model \cite{Watts1998} and the preferential attachment model \cite{Barabasi1999,Bollobas2001} are some well-known examples. Usually, such constructive models are introduced and studied because the graphs they produce have some interesting or relevant property: The Watts--Strogatz model can produce graphs with the ``small-world'' property, which is famously present in social networks. The preferential attachment model can produce ``scale-free'' graphs, i.e.\ graphs with a power-law degree distribution, a much-studied property which many real-world networks possess \cite{Voitalov2019,Broido2019}. However, not all scale-free networks can be produced by the preferential attachment mechanism, and one cannot make general statements about all scale-free networks based only on those generated by a preferential attachment model. Therefore, for some purposes, it is useful to define random graph models not through a construction process, but by directly imposing a property of interest. The simplest way to define such a model (i.e.\ distribution) is to constrain its support $\mathcal{G}$ to include only those graphs that possess a given property, and assign the same probability to all elements of~$\mathcal{G}$.  The graph ensemble obtained this way represents the property of interest the best. A related approach constrains the averages of some numerical graph properties and defines the distribution over $\mathcal{G}$ to be the one with maximal entropy, which leads to \emph{exponential random graph models} \cite{Holland1981,Park2004,Horvat2015}.

Returning to the challenge of rendering the null hypothesis in a network setting, constraint-based random graph models are particularly useful as \emph{null models}. Null models are used to determine if an interesting observed feature of some empirical network can be explained by another simpler feature. The simpler feature is used as a constraint to define a random graph model, which is then compared to the empirical data. Another application is dealing with incomplete empirical data. Sometimes, it is not possible to fully map the connections of a real-world network, either due to practical limitations or, in the case of networks of people, due to privacy concerns \cite{Liljeros2001}. In such cases, the known data can be incorporated as a constraint into a random graph model, and individual networks sampled from the model can be used as proxies for the (unknown) real network. Both applications require being able to computationally generate samples from the model. In the case of constraint-based models this means restricting the set of graphs $\mathcal{G}$ to only those that satisfy the constraint, then performing uniform sampling. This is usually a difficult problem, as there are no general sampling methods that work with arbitrary constraints. Each constraint requires developing a sampling algorithm specific to it, and combining multiple constraints is a significant additional challenge.

In this paper we consider the problem of sampling \emph{connected} graphs with \emph{a given degree sequence}. Constraining the \emph{degrees} has countless practical applications: It is a frequently used null model, for example when finding network motifs \cite{Milo2002}, detecting a so-called ``rich-club structure'' \cite{Colizza2006} or analysing the assortative structure of networks \cite{Newman2002}. Degree-constrained random graphs are also useful as proxies when only the degrees of an empirical network are known, such as in the case of the famous web of sexual connections dataset \cite{Liljeros2001}. The constraint of \emph{connectedness} is a frequent additional requirement: Many real-world networks, such as vasculature, brain networks, or molecules (networks of atoms), are always connected. Commonly used network measures like closeness centrality are only meaningful for connected graphs. Processes such as epidemic spreading must be modelled on connected networks. In this work we present a novel method to handle these two constraints, \emph{degrees} and \emph{connectedness}, simultaneously.

The article is organized as follows: \myref{Section}{sec:foundation} introduces the mathematical background used in later parts of the article, and reviews existing sampling methods for graphs with constrained degrees. \myref{Section}{sec:connected} presents a new and simple algorithm to construct a single connected graph with given degrees. \myref{Section}{sec:biased} presents a recently introduced family of importance sampling methods for graphs with constrained degrees, and shows how to incorporate the additional constraint of connectedness. Finally, \myref{section}{sec:numerics} demonstrates the practical applicability of the method on both synthetic and real-world examples.

\section{Mathematical foundations}
\label{sec:foundation}

In this section we introduce the concepts, terminology and notations used in the rest of the work. We say that a graph is \emph{simple} if it has no multi-edges or self-loops, i.e.\ if any two vertices have at most one connection between them, and no vertex is connected to itself. The \emph{degree} $d$ of a vertex is the number of connections it has. The \emph{degree sequence} of a graph on $n$ vertices is simply the collection of its vertex degrees, $\mathbf{d} = (d_1, d_2, \dots, d_n)$. If the degree sequence of a graph $G$ is $\mathbf{d} = (d_1, d_2, \dots, d_n)$, we say that \emph{the graph $G$ realizes the degree sequence $\mathbf{d}$}.

Since each edge in a graph connects to a vertex at both of its endpoints, the sum of the degrees in a graph is twice the number of its edges, an even number. This statement is commonly known as \emph{the handshaking lemma}. But not every even-sum sequence of integers is realizable as a simple graph. For example, $\mathbf{d} = (3,2,1)$ can only be realized by either a graph that includes self-loops,
\raisebox{-0.25\height}[0pt][0pt]{\includegraphics{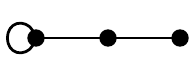}}, 
or one that includes multi-edges, \raisebox{-0.25\height}[0pt][0pt]{\includegraphics{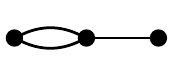}}.

\begin{defn}[graphicality]
A degree sequence is said to be \textbf{graphical} if there is a simple graph that realizes it.
\end{defn}

\noindent
The well-known Erd\H{os}--Gallai theorem provides a direct way to check if a degree sequence is graphical.

\begin{thm}[Erd\H{o}s and Gallai, \cite{ErdosGallai1960}]
\label{thm:eg}
Let $d_1 \ge d_2 \ge \cdots \ge d_n$ be a degree sequence. There is a simple graph that realizes this degree sequence if and only if $\sum_{i=1}^n d_i$ is even and 
\begin{equation}
\sum_{i=1}^k d_i \le k(k-1) + \sum_{i=k+1}^{n} \min(d_i, k)
\end{equation}
for every $1 \le k \le n$.
\end{thm}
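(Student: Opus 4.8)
The statement splits into two implications of very different character, and the plan is to dispatch the necessity (``only if'') direction by a direct double-counting argument, and to establish sufficiency (``if'') by an induction that repeatedly strips off the vertex of largest degree.

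For necessity, suppose a simple graph $G$ realizes $\mathbf{d}$. The evenness of $\sum_i d_i$ is the handshaking lemma already noted. To obtain the $k$-th inequality, fix $k$ and let $S = \{v_1,\dots,v_k\}$ be the $k$ vertices of largest degree, so that $\sum_{i=1}^k d_i$ is the total degree of $S$. Each edge contributing to this total lies either inside $S$ or leaves $S$. The edges inside $S$ number at most $\binom{k}{2}$ and hence contribute at most $k(k-1)$ to the degree sum. An edge leaving $S$ lands at some vertex $v_i$ with $i>k$, and such a vertex can absorb at most $\min(d_i,k)$ of them, since it has degree $d_i$ yet at most $k$ of its neighbours can lie in $S$. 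Summing over $i>k$ and adding the two bounds yields exactly the claimed inequality.

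For sufficiency I would argue by induction on $n$, equivalently on $\sum_i d_i$. Assume $\mathbf{d}$ is sorted, satisfies all the inequalities, and has even sum; if $d_1 = 0$ the empty graph works, so take $d_1 \ge 1$. Form a reduced sequence $\mathbf{d}'$ by deleting $d_1$ and subtracting $1$ from each of the next $d_1$ largest entries $d_2,\dots,d_{d_1+1}$, then re-sorting (a Havel--Hakimi-type reduction). The sum drops by $2d_1$ and so stays even. The crux is to verify that $\mathbf{d}'$ still satisfies every Erd\H{o}s--Gallai inequality; granting this, the induction hypothesis produces a simple graph $G'$ realizing $\mathbf{d}'$, and re-attaching a fresh vertex joined to the $d_1$ vertices whose degrees were decremented yields a simple graph realizing $\mathbf{d}$.

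The main obstacle is precisely this preservation step: confirming that truncating the largest degree against the next $d_1$ entries violates none of the inequalities for $\mathbf{d}'$. This requires splitting into cases according to where the truncation index $k$ sits relative to $d_1$ and to the point at which the terms $\min(d_i,k)$ switch from $d_i$ to $k$, and then carefully tracking how both sides of the inequality change under the decrement. I expect this bookkeeping, rather than any conceptual difficulty, to be the technical heart of the argument; the alternative of decrementing a single well-chosen pair of degrees at a time avoids some of the case analysis but introduces the separate nuisance of a possibly pre-existing edge, which must be freed up by a $2$-switch.
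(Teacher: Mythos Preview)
The paper does not prove this theorem at all: it is quoted as a classical result, attributed via citation to Erd\H{o}s and Gallai, and then used as a black-box graphicality test. There is therefore no proof in the paper against which to compare your attempt.

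That said, your proposal is a legitimate route to the result. The necessity direction is the standard double-counting argument and is complete as written. For sufficiency, reducing via the Havel--Hakimi step and checking that the Erd\H{o}s--Gallai inequalities survive is a known strategy, and you have correctly identified the preservation step as the only real work; the case analysis you anticipate (splitting on the position of $k$ relative to $d_1$ and to the crossover index where $\min(d_i,k)$ changes regime) is exactly what is required, and it does go through. One contextual remark: since the paper states the Havel--Hakimi theorem separately (as its Theorem~\ref{thm:hh}) and treats it as an independent tool, deriving Erd\H{o}s--Gallai sufficiency \emph{through} a Havel--Hakimi reduction blurs the logical independence of the two results as the paper presents them; a self-contained proof of Erd\H{o}s--Gallai that does not lean on Theorem~\ref{thm:hh} (e.g.\ a direct constructive argument or a flow-based one) would fit the paper's structure more cleanly.
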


\noindent 
Tripathi and Vijay have shown that it is sufficient to check these inequalities for those $k$ where $d_k \ne d_{k+1}$ and for $k=n$ \cite{Tripathi2003}. Using this stricter version of the theorem, it is possible to perform the checks in linear computational time. Kir\'{a}ly \cite{Kiraly2012} and Cloteaux  \cite{Cloteaux2016} describe two such linear-time algorithms for testing graphicality.

\begin{defn}[multigraphicality]
A degree sequence is said to be \textbf{multigraphical} if there is a graph, potentially containing multi-edges, but no self-loops, that realizes it. We refer to such a graph as a \emph{loopless multigraph}.
\end{defn}

\begin{thm}[multigraphicality]
\label{thm:multigraphical}
Let $\mathbf{d} = (d_1, d_2, \dots, d_n)$ be a degree sequence. There is a loopless multigraph that realizes $\mathbf{d}$ if and only if $\sum_{i=1}^n d_i$ is even and
\begin{equation}
\frac{1}{2} \sum_{i=1}^n d_i \ge \dmax,
\label{eq:multigraphical}
\end{equation}
where \dmax\ denotes the largest degree in $\mathbf{d}$.
\end{thm}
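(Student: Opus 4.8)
The plan is to prove the two directions separately: the ``only if'' direction will follow from elementary edge-counting, while the ``if'' direction calls for an explicit inductive construction of a realization.

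For necessity, suppose a loopless multigraph $G$ realizes $\mathbf{d}$. The handshaking lemma immediately gives that $\sum_i d_i = 2\lvert E(G)\rvert$ is even. For the second condition I would focus on a vertex $v$ of maximum degree \dmax. Since $G$ has no self-loops, each of the \dmax\ edges incident to $v$ has its other endpoint among the remaining $n-1$ vertices, and so contributes $1$ to their degree sum. Hence $\sum_{u \ne v} d(u) \ge \dmax$. Writing $S = \sum_i d_i$, this reads $S - \dmax \ge \dmax$, that is $\tfrac12 S \ge \dmax$, which is exactly \eqref{eq:multigraphical}.

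For sufficiency, I would argue by induction on the number of edges $m = \tfrac12 \sum_i d_i$, building the multigraph one edge at a time. The base case $m = 0$ forces every degree to vanish, realized by the empty graph. For the inductive step, assume the degrees are sorted as $d_1 \ge d_2 \ge \dots \ge d_n$ with $S \ge 2$ even and $S \ge 2 d_1$. A short argument shows $d_1, d_2 \ge 1$: if $d_2 = 0$ then all later degrees vanish too, so $S = d_1$, and the constraint $S \ge 2 d_1$ forces $d_1 \le 0$, contradicting $S \ge 2$. I may therefore add a single edge joining the two largest-degree vertices and decrement $d_1$ and $d_2$ by one. The resulting sequence has degree sum $S - 2$, still even, and a realization of it supplied by the inductive hypothesis, together with this extra (loopless) edge, yields a loopless multigraph for $\mathbf{d}$.

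The crux is verifying that the decremented sequence still satisfies $S - 2 \ge 2\,d'_{\max}$, where $d'_{\max}$ denotes its new maximum, and I expect this to be the main obstacle since it splits into cases. If $d_1 - 1 \ge d_3$, then $d'_{\max} = d_1 - 1$ and the bound is immediate from $S \ge 2 d_1$. The delicate case is $d_3 = d_1$ (forcing $d_1 = d_2 = d_3$), where $d'_{\max} = d_1$ and one needs $S - 2 \ge 2 d_1$; here the three equal top degrees give $S \ge 3 d_1$, and invoking the parity of $S$ upgrades this just enough (to $S \ge 3 d_1 + 1$ when $d_1$ is odd) to close the gap for every positive $d_1$. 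Thus the evenness hypothesis, which might look like a mere byproduct of handshaking, in fact does essential work in the construction.
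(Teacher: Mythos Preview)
Your proof is correct, but it takes a different route from the paper's. For sufficiency, the paper starts from an arbitrary loopy realization of $\mathbf{d}$ (which exists trivially for any even-sum sequence) and then eliminates self-loops one at a time via degree-preserving edge switches: a self-loop at $v$ can always be switched with some edge not incident to $v$, and such an edge exists because the total edge count $\tfrac12\sum_i d_i \ge d_v$ exceeds the at most $d_v-1$ edges incident to $v$. Your argument instead builds the realization from scratch by an induction that repeatedly lays down an edge between the two current largest-degree vertices, checking that the multigraphicality inequality survives. Interestingly, this is essentially the content of the paper's later \myref{Theorem}{thm:cons-multi} (loopless multigraph construction), which the paper proves \emph{using} \myref{Theorem}{thm:multigraphical} rather than as a route to it; your case analysis (including the parity trick when $d_1=d_2=d_3$) is the special instance where the ``arbitrary'' vertex is also chosen to be of maximum degree. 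The paper's edge-switch argument is shorter and avoids the case split, while your approach is more self-contained and directly constructive, yielding an explicit greedy algorithm as a byproduct.
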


\noindent
The proof of \myref{theorem}{thm:multigraphical} is given in \ref{apd:multigraphical}.

Not every graphical or multigraphical degree sequence has a \emph{connected} realization. For example, the degree sequence $(1,1,1,1)$ is only realized by the non-connected graph \raisebox{-0.25\height}[0pt][0pt]{\includegraphics{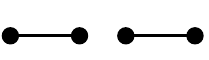}}.

\begin{defn}[potential connectedness]
A degree sequence is said to be \textbf{potentially connected} if it has a realization that is connected.
\end{defn}

\noindent
The concept of potential connectedness also applies to degree sequences which only have non-simple realizations. However, it can be shown that all potentially connected sequences that are graphical have connected realizations that are also simple.

In this paper we consider so-called \textbf{labelled graphs}, i.e.\ we consider the vertices to be distinguishable. Thus, the degree sequence $(d_1, d_2, d_3, d_4) = (1,2,2,1)$ is taken to have two isomorphic but distinct realizations as \raisebox{-0.333\height}[0pt][0pt]{\includegraphics{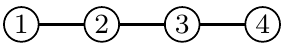}} and \raisebox{-0.333\height}[0pt][0pt]{\includegraphics{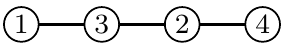}}.

\subsection{Approaches to sampling graphs with a given degree sequence}

There are two widely used approaches to uniformly sampling simple labelled graphs with a prescribed degree sequence: (1)~``stub-matching'' algorithms such as the configuration model and (2)~Markov chain Monte Carlo sampling based on degree-preserving edge switches. We briefly review both families of methods, and consider how the additional constraint of connectedness can be incorporated into them.

\textbf{The configuration model}, also called the \emph{pairing model}, is probably the simplest and most widely known approach to generating random graphs with a given degree sequence.  The sampling algorithm proceeds as follows: Let us consider each vertex with as many unconnected stubs as its degree, as shown in \myref{figure}{fig:stubs}. Then repeatedly pick two not-yet-connected stubs uniformly at random and connect them, until there are no unconnected stubs left. This algorithm may clearly produce graphs that are not simple (i.e.\ they have multi-edges or self-loops). Such graphs are simply rejected, and the generation procedure is restarted. 

The configuration model's algorithm produces each simple realization of the degree sequence with the same probability (although the same is not true for non-simple ones) \cite{NewmanBook}. Therefore, by rejecting the non-simple outcomes, the simple realizations can be sampled uniformly. It is important to note that if the outcome is non-simple, the generation procedure must be restarted from the beginning. It is not sufficient to merely reject any stub pairing that creates a non-simple edge and choose another one instead. Doing so would no longer produce each realization with the same probability, as is shown in \myref{section}{sec:biased}.

\begin{figure}[tb!]
   \centering
   \includegraphics{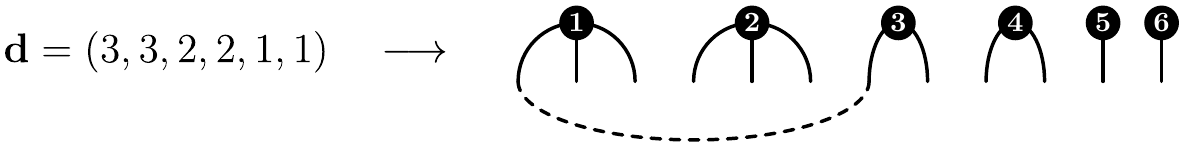} 
   \caption{A degree sequence can be visualized by sketching each vertex with as many unconnected stubs as its degree. Graph construction proceeds by connecting the stubs (dashed line).}
   \label{fig:stubs}
\end{figure}

The configuration model works well for sparse graphs that have small degrees. However, as the graph gets denser, 
the probability that the algorithm generates a non-simple graph, which must be rejected, increases quickly. For dense graphs, the rejection rate becomes too high for this sampling method to be computationally feasible. The same is true for degree sequences of sparse graphs that have a few very high degree vertices, such as scale-free and other heavy tail degree distributions, which are commonly observed in real-world networks \cite{Voitalov2019,Broido2019}. Therefore, the configuration model is only practical in some limited situations.

The constraint of connectedness can be incorporated trivially into the configuration model: simply reject any non-connected outcomes along with the non-simple ones. However, usually, most realizations of a sparse degree sequence are not connected, increasing the rejection rate further. This makes the connected version of the configuration model unfeasible for sparse graphs as well.

\textbf{Edge-switching Markov chain Monte Carlo (MCMC) methods} work by first building a single realization of the degree sequence, then repeatedly modifying the graph using random, degree sequence preserving \emph{edge switches} like those shown in \myref{figure}{fig:edge-switch}. It can be shown that even though not all pairs of edges can be switched without creating a non-simple graph, all simple realizations of a degree sequence can be reached with permissible edge switches. Consequently, a Markov chain constructed using edge switches is irreducible. It can be shown that if the edges to be switched are chosen uniformly at random, and the switch is simply not performed when it would create a multi-edge, then the stationary distribution of the Markov chain will be uniform. Details are given in \ref{apd:mcmc}. Sampling can be performed as usual with MCMC, by recording states from the chain at certain intervals.

\begin{figure}[tb!]
   \centering
   \includegraphics{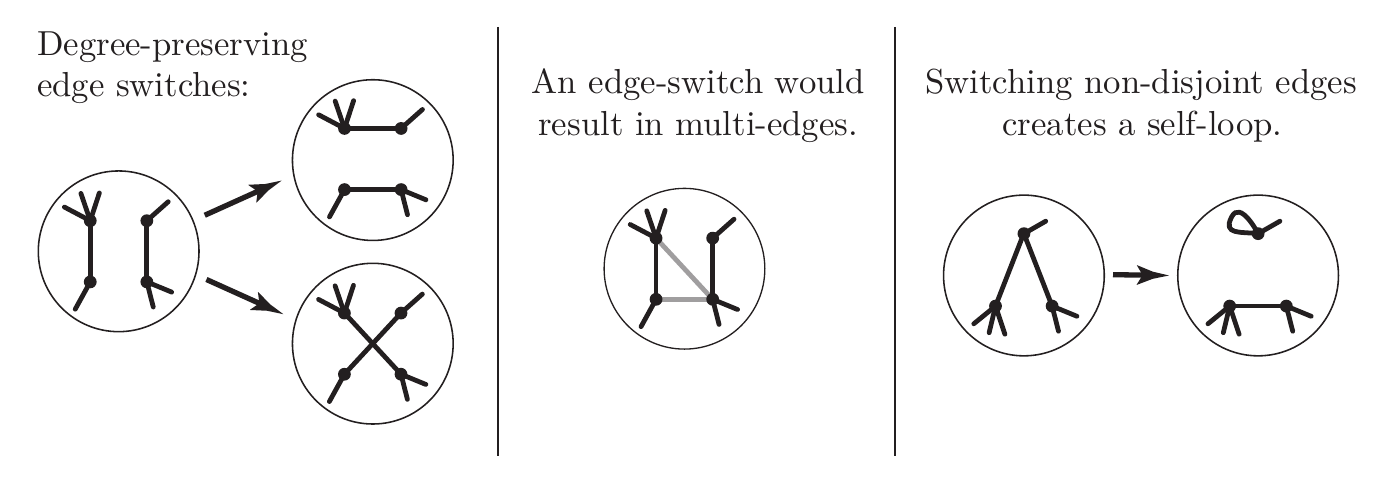} 
   \caption{\emph{Edge switches} can be used to modify a graph while preserving its degree sequence, as follows: 1.\ select two edges $(a,b)$ and $(c,d)$; 2.\ remove them and add either $(a,c),\, (b,d)$, or $(a,d),\, (b,c)$ instead. Not all pairs of edges can be switched without creating multi-edges. If the two edges share a vertex, then the switch creates a self-loop.}
   \label{fig:edge-switch}
\end{figure}

Incorporating the connectedness constraint into such a sampler is more involved than in the case of the configuration model. The Markov chain is still irreducible if edge switches that would disconnect the graph are forbidden \cite{Taylor1981}. However, testing whether an edge switch disconnects the graph takes computational time proportional to the size of the graph. Performing this test after every edge switch would make the method impractically slow.  While there are published algorithms that make use of information from previous connectedness tests to achieve an average polylogarithmic complexity when a series of incremental changes are made to the graph \cite{Henzinger1999,Thorup2000,Holm2001}, these algorithms are complicated and their implementation is involved. It is unclear if they would perform sufficiently well in practice. We are not aware of any MCMC-based graph sampler implementation that makes use of them. More practical approaches perform multiple edge switches between connectedness checks \cite{Gkantsidis2003,Viger2015}. Frequent connectedness checks would result in bad computational performance, while an insufficient number of checks makes it more likely that the graph becomes disconnected, and therefore the last few edge switches must be reverted. These methods use heuristics to find an optimal number of switches to perform between connectedness checks, and maximize performance.

An alternative approach uses only local edge switches performed between pairs of edges that are connected by a third edge. These restricted switches, called \emph{edge flips}, preserve connectedness. However, the flip Markov chain is irreducible only in certain special cases, such as for regular graphs or when all realizations of the degree sequence have a diameter of at least 4 \cite{Cooper2019,Feder2006,Mahlmann2005}. Furthermore, it requires performing a larger number of moves to randomize the graph than the other MCMC approaches.

The disadvantage of MCMC-based methods is that the mixing time of these Markov chains is not known in general \cite{Milo2003,Erdos2019}. Therefore, one must use heuristics to determine how many switches to perform between recording samples to ensure that the successive samples will be sufficiently statistically independent. In this sense, these algorithms are not exact.

\section{Building a single connected realization of a degree sequence}
\label{sec:connected}

In this section we present a new simple and elegant algorithm to build a connected realization of a degree sequence, if one exists. Constructing such a graph is the first step of any edge-switching MCMC sampling algorithm. We will show two versions of the construction process: to build either a simple graph, or a loopless multigraph.

Let us first consider constructing an arbitrary, not-necessarily-connected \emph{simple} realization of a degree sequence. The Erd\H{o}s--Gallai theorem provides a fast way to check whether a degree sequence is graphical, but not to construct a corresponding graph. To build a realization of the degree sequence, we can use the well-known Havel--Hakimi theorem.

\begin{thm}[Havel and Hakimi, \cite{Havel1955,Hakimi1962}]
\label{thm:hh}
The degree sequence $\mathbf{d} = (d_1 = \Delta, d_2 \ge d_3 \ge \cdots \ge d_n)$ is graphical if and only if after connecting vertex 1 to the $\Delta$ largest-degree vertices, the remaining degree sequence $\mathbf{d}' = (d_2-1, d_3 - 1, \dots, d_{\Delta+1}-1, d_{\Delta+2}, \dots, d_n)$ is also graphical.
\end{thm}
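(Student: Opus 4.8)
The plan is to prove the two directions of the equivalence separately, with the reverse direction being routine and the forward direction carrying all the weight. Throughout I take the vertices to be labelled $1, 2, \dots, n$ with $d_1 = \Delta \ge d_2 \ge \dots \ge d_n$, so that the ``$\Delta$ largest-degree vertices'' other than vertex $1$ are exactly $2, 3, \dots, \Delta+1$. I write $N(v)$ for the neighbourhood of a vertex $v$.

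For the direction ``$\mathbf{d}'$ graphical $\Rightarrow$ $\mathbf{d}$ graphical'', I would simply start from a simple graph $G'$ on the vertex set $\{2, \dots, n\}$ realizing $\mathbf{d}'$, introduce a fresh vertex $1$, and join it to the vertices $2, \dots, \Delta+1$. Since those are precisely the vertices whose target degree was lowered by one in passing from $\mathbf{d}$ to $\mathbf{d}'$, every degree is restored to its value in $\mathbf{d}$, vertex $1$ receives degree $\Delta$, and no multi-edge or self-loop is created. Hence $\mathbf{d}$ is graphical. This part is immediate and I expect no difficulty.

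The substance is the converse, and the key reduction is a lemma: \emph{if $\mathbf{d}$ is graphical, then it admits a realization in which vertex $1$ is adjacent to exactly the vertices $2, \dots, \Delta+1$.} Granting this, I delete vertex $1$ from such a realization; the degrees of $2, \dots, \Delta+1$ each drop by one while all other degrees are unchanged, so the remaining graph realizes $\mathbf{d}'$, proving it graphical. To establish the lemma I would use a canonical-realization argument: among all simple realizations of $\mathbf{d}$, choose one $G$ maximizing $\left| N(1) \cap \{2, \dots, \Delta+1\} \right|$, the number of top vertices already adjacent to vertex~$1$. If this count falls short of $\Delta$, then since $\deg(1) = \Delta$ there must exist a ``misplaced'' pair: a vertex $j \in \{2, \dots, \Delta+1\}$ with $j \not\sim 1$ and a vertex $k > \Delta+1$ with $k \sim 1$, and by the sorting $d_j \ge d_k$. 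I would then locate a fourth vertex $\ell$ with $\ell \sim j$, $\ell \not\sim k$ and $\ell \notin \{1, k\}$, and perform the degree-preserving edge switch that replaces the edges $(1,k)$ and $(j,\ell)$ by $(1,j)$ and $(k,\ell)$ --- an edge switch of exactly the kind shown in \myref{figure}{fig:edge-switch}. This switch leaves the degree sequence intact, turns $1 \sim k$ into $1 \sim j$, and therefore strictly increases $\left| N(1) \cap \{2, \dots, \Delta+1\} \right|$, contradicting the choice of $G$ and forcing the count to equal $\Delta$.

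The main obstacle is guaranteeing that such an $\ell$ exists, and in particular handling the tie $d_j = d_k$. A naive count shows that $N(j) \setminus (N(k) \cup \{k\})$ contains at least $d_j - d_k \ge 0$ vertices (any of which automatically differs from $1$, since $1 \notin N(j)$), which is not yet strictly positive when the degrees are equal. The fix is to exploit the asymmetry created by vertex~$1$: because $1 \in N(k)$ while $1 \notin N(j)$, the intersection $N(j) \cap N(k)$ misses the vertex $1$ and so has at most $d_k - 1$ elements. Were the estimate nevertheless tight, one would be forced to conclude both $j \sim k$ and $N(k) \setminus \{1\} \subseteq N(j)$, which together place $j$ inside its own neighbourhood --- impossible in a simple graph. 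This contradiction improves the bound to $|N(j) \setminus (N(k) \cup \{k\})| \ge d_j - d_k + 1 \ge 1$, so a valid $\ell$ always exists. With the lemma established, the converse, and hence the theorem, follows.
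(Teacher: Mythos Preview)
The paper does not give a proof of this theorem; it is quoted as the classical Havel--Hakimi result with citations to the original sources and then used as a black box in the proofs of Theorems~\ref{thm:conn-hh} and~\ref{thm:scg}. There is therefore nothing in the paper to compare your argument against.

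That said, your proof is correct and is precisely the standard edge-switching argument. The reverse implication is indeed trivial. For the forward implication, the extremal choice of a realization maximizing $|N(1)\cap\{2,\dots,\Delta+1\}|$, together with a single switch of the type in \myref{figure}{fig:edge-switch}, is the textbook approach. Your treatment of the delicate point---the existence of $\ell$ when $d_j=d_k$---is sound: from $N(j)\subseteq (N(k)\setminus\{1\})\cup\{k\}$ one obtains $k\in N(j)$, hence $j\in N(k)\setminus\{1\}\subseteq N(j)$, a self-loop contradiction. The exposition around this step is slightly tangled (the sentence about $|N(j)\cap N(k)|\le d_k-1$ reads as if it were already the fix, whereas that observation is what underlies the ``naive'' bound $d_j-d_k$ in the first place), but the logic closes correctly.
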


\noindent
This theorem can be understood as an algorithm to construct a simple graph: As with the configuration model, we consider the vertices of the graph with as many stubs as their degrees (\myref{figure}{fig:stubs}). In each step of the algorithm, we select an arbitrary vertex (the ``hub''), and connect all of its stubs to the other vertices that have the most unconnected stubs left (highest \emph{remaining degree}). The hub is then dropped from the degree sequence, along with any other vertices that have no remaining stubs. This step is repeated until no more degrees remain, or until no stubs can be connected without forming a non-simple graph. The theorem states that a degree sequence is graphical if and only if after performing a single step of the algorithm on it, the \emph{remaining degree sequence} formed by the yet-unconnected stubs is also graphical. Thus, the algorithm will succeed in connecting up all the stubs if and only if the original degree sequence was graphical to begin with. This provides a way to both check the graphicality of a degree sequence and to build one of its realizations at the same time.

The Havel--Hakimi algorithm can construct a realization of a degree sequence, but how can we construct a \emph{connected} realization? Previously, this has been done by first constructing an arbitrary, not necessarily connected realization, then using appropriately chosen edge switches (\myref{figure}{fig:edge-switch}) to connect together the components of the graph \cite{Viger2015}. This method is complicated and cumbersome to implement.  Here we propose a simple and elegant alternative.

Note that the Havel--Hakimi theorem does not specify which vertex to choose as the hub in each step: any of them will do. Let us refer to choosing the vertex with the smallest remaining degree as a ``HH*-step''.

\begin{thm}[connected Havel--Hakimi]
\label{thm:conn-hh}
Given a graphical degree sequence, the smallest-first Havel--Hakimi algorithm (i.e.~consisting of HH*-steps) will produce a connected graph if and only if the starting degree sequence was potentially connected.
\end{thm}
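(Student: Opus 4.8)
The plan is to prove the two directions separately, with essentially all of the work lying in the ``if'' direction. The ``only if'' direction is immediate: if the HH*-algorithm terminates with a graph $G$, then $G$ is by construction a realization of the degree sequence, so if $G$ is connected the sequence certainly has a connected realization and is therefore potentially connected.

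For the converse I would first isolate a checkable characterization of potential connectedness: a graphical sequence on $n\ge 2$ vertices is potentially connected if and only if every $d_i\ge 1$ and $\tfrac12\sum_i d_i\ge n-1$ (at least $n-1$ edges). Necessity is clear, and sufficiency is the one nontrivial point, which I would prove by an edge-switching argument in the spirit of \myref{figure}{fig:edge-switch}. Starting from any realization with $c\ge 2$ components, the inequality $m\ge n-1>n-c$ forces some component to contain a cycle; switching a cycle edge $(a,a')$ against an edge $(b,b')$ of a \emph{different} component into $(a,b),(a',b')$ merges the two components while preserving simplicity and all degrees, so iterating drives $c$ down to $1$.

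The core of the argument is then a strong induction on $n$, tracking a single HH*-step. Let $\delta$ be the minimum remaining degree, so the hub $v$ has degree $\delta$, is joined to the $\delta$ highest-degree vertices, and is then dropped together with $z\ge 0$ saturated targets. I would establish two facts. First, the reduced sequence $\mathbf d'$ is again potentially connected: it is graphical by \myref{theorem}{thm:hh} (which the text notes permits any hub), it has no zero entries because those are dropped, and its edge surplus is nonnegative. This last point is exactly where the \emph{minimum}-degree choice is essential: writing $n'=n-1-z$ and $m'=m-\delta$ gives $m'-(n'-1)=(m-n+1)-(\delta-1)+z$, and since $\delta$ is the minimum degree we have $2m\ge n\delta$, whence $m\ge n+\delta-2$ (because $(n-2)(\delta-2)\ge 0$) when $\delta\ge 2$, while the hypothesis $m\ge n-1$ covers $\delta=1$; either way the surplus is $\ge 0$. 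Second, the partial construction glues correctly: by the induction hypothesis HH* on $\mathbf d'$ yields a connected graph $G'$ spanning all surviving vertices; the maximum-degree vertex, which is necessarily among the $\delta$ chosen targets, has degree $\ge 2$ (a potentially connected sequence on $n\ge 3$ vertices cannot have all degrees $\le 1$, as that would force $m\le n/2<n-1$), so it retains positive residual degree and survives into $G'$ as a vertex $t^\ast$. The edge $vt^\ast$ attaches $v$ to $G'$, and every remaining target---saturated or not---is adjacent to $v$, so all $n$ vertices lie in a single component. The base cases $n\le 2$ are checked directly.

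The step I expect to be the main obstacle is the edge-count preservation for $\mathbf d'$, together with ruling out the degenerate gluing in which the hub edge becomes an isolated $K_2$. Both are resolved by exploiting that the hub has \emph{minimum} degree: this is precisely what keeps $\delta$ small enough for $m\ge n+\delta-2$ to hold, and what forces the top target to keep a positive residual degree for $n\ge 3$, so that the freshly attached hub can never split off from the inductively connected remainder. (By contrast, a maximum-first hub makes $\delta$ large and breaks the inequality, which is morally why ordinary Havel--Hakimi need not preserve connectedness.)
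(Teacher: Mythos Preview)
Your proof is correct and follows essentially the same route as the paper: both hinge on the characterization of potential connectedness via $m\ge n-1$ (your isolated lemma is the paper's \myref{Lemma}{thm:pc}), both show that a single HH*-step preserves this invariant, and both reduce the key inequality to the observation that $2m\ge n\delta$ together with $(n-2)(\delta-2)\ge 0$ handles $\delta\ge 2$ while the hypothesis $m\ge n-1$ handles $\delta=1$. The only presentational difference is that the paper first argues directly that $z=0$---a target becomes saturated only if it had degree $1$, but targets have degree at least $\delta$, so $z>0$ forces $\delta=1$ and hence all degrees equal to $1$, which is excluded for $n>2$---and then phrases the conclusion as ``reversing the steps adds one vertex at a time''; you instead allow $z\ge 0$ and handle the gluing via the surviving top target. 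Both are fine, and the underlying computation is identical.
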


\begin{proof}
The key to the proof is to show that if the starting degree sequence is potentially connected, then every HH*-step reduces the number of vertices having non-zero remaining degree precisely by one, except in the very last step, when two vertices with remaining degree 1 each are connected to each other to complete the graph.  Reversing the order of the steps would then correspond to building a graph by adding one vertex at a time and connecting it to some existing vertices.  This clearly results in a connected graph.

Let us think about what kind of degree sequence we must apply a HH*-step to in order to reduce the number of vertices by more than one.  The hub vertex is always removed.  Additional vertices will only be removed if they only had one remaining stub (i.e.~they had degree 1), which was then connected up to the hub vertex.  Since we always choose a smallest-degree vertex as the hub, and connect it to the other vertices with the highest degrees, this situation is only possible when both the smallest and largest degree is 1.  For example, the degree sequence $(1,1,1,1)$ is transformed to $(1,1)$ after one HH*-step, i.e.~it decreases in size by 2. Except for $(1,1)$, such degree sequences consisting solely of 1s are not potentially connected. Thus, we have established that a HH*-step removes precisely one vertex from any potentially connected degree sequence of length greater than two.

In the following, we will show that one HH*-step transforms any potentially connected degree sequence into another potentially connected one. Therefore, vertices are removed one at a time throughout the HH* construction procedure, up to the very last step when two degree-1 vertices are connected to each other. This will complete the proof of the theorem.

Note that with an arbitrary graph construction process, it is not necessary to maintain the potential connectedness of intermediate degree sequences in order to arrive to a connected graph. Maintaining potential connectedness at intermediate stages is a sufficient, but not a necessary condition for obtaining a connected graph. To show that the remaining degree sequences stay potentially connected throughout the HH* construction process, we invoke the following lemma:

\begin{lem}[potential connectedness]
\label{thm:pc}
Let $(d_1, d_2, \dots, d_n)$ be the degree sequence of a (not necessarily simple) graph. There is a connected realization of this degree sequence if and only if $\frac{1}{2} \sum_i d_i \ge n-1$ and $d_i \ne 0, \forall i$, or if $n=1$.
\end{lem}

\noindent
The proof is given in \ref{apd:pc}.

Will the inequality required for potential connectedness in \myref{lemma}{thm:pc} stay valid after modifying the degree sequence with a HH*-step? The right-hand side will decrease by $1$ from $n-1$ to $n-2$. If the selected hub vertex had degree 1, then the left-hand-side also decreases by 1, thus the inequality is maintained.

If the hub vertex had degree $\Delta \ge2$, then the sum of degrees is at least $n \Delta$, $\sum_{i=1}^n d_i \ge n \Delta$.  After one HH* step, the sum of degrees decreases by $2 \Delta$, thus we only need to show that $n\Delta/2 - \Delta \ge (n-2)$, which is obviously true for $\Delta \ge 2$. The inequality is maintained again. \end{proof}

Let us now consider the case of loopless multigraphs, which may be constructed with a procedure analogous to the Havel--Hakimi algorithm.

\begin{thm}[loopless multigraph construction]
\label{thm:cons-multi}
The degree sequence $\mathbf{d} = (d_1, d_2 \ge d_3 \ge \cdots \ge d_n)$ is multigraphical if and only if after connecting vertex 1 to vertex 2 with a single edge, the remaining degree sequence $\mathbf{d}' = (d_1 - 1, d_2-1, d_3, \dots, d_n)$ is also multigraphical.
\end{thm}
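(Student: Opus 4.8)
The plan is to avoid reasoning about multigraphs directly and instead reduce everything to the algebraic characterization of multigraphicality supplied by \myref{Theorem}{thm:multigraphical}. Writing $S = \sum_{i=1}^n d_i$ and taking the sequence to be sorted so that $d_1 = \dmax$ is the largest degree and $d_2$ the second largest, \myref{Theorem}{thm:multigraphical} says that $\mathbf{d}$ is multigraphical exactly when $S$ is even and $\frac{1}{2} S \ge d_1$. The modified sequence $\mathbf{d}'$ has degree sum $S' = S - 2$ and maximum $\dmax' = \max(d_1 - 1, d_3)$, so I would first record the two trivial observations that $S$ and $S-2$ have the same parity, and that $\dmax' \ge d_1 - 1$ always. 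With these in hand the theorem becomes a comparison of the two inequalities $\frac{1}{2} S \ge d_1$ and $\frac{1}{2}(S-2) \ge \dmax'$, together with the (automatic) matching of parities.

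First I would dispatch the reverse implication, which is the easy half. Here it is cleanest to argue combinatorially: take any loopless multigraph realizing $\mathbf{d}'$ and add a single edge between vertices $1$ and $2$; since these are distinct no self-loop is created, and the result realizes $\mathbf{d}$. Alternatively one sees it algebraically, since $\frac{1}{2} S = \frac{1}{2}(S-2) + 1 \ge \dmax' + 1 \ge d_1$, which with the preserved parity yields multigraphicality of $\mathbf{d}$.

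For the forward implication I would split on the value of $\dmax'$. In the generic case $\dmax' = d_1 - 1$, the desired inequality $\frac{1}{2}(S-2) \ge d_1 - 1$ is literally the hypothesis $\frac{1}{2} S \ge d_1$, so nothing further is required. The only other possibility is $\dmax' = d_3 > d_1 - 1$, which forces $d_1 = d_2 = d_3$ because the degrees are sorted and integer-valued; intuitively, a third vertex remains tied at the old maximum after the edge is removed. Writing $S = 3 d_1 + R$ with $R \ge 0$, the target $\frac{1}{2}(S-2) \ge d_1$ then reduces to the elementary condition $d_1 + R \ge 2$.

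This last reduction is where the real content sits, and it is the step I expect to be the main obstacle, since the inequality $\frac{1}{2} S \ge d_1$ alone is not strong enough. For $d_1 \ge 2$ the condition $d_1 + R \ge 2$ is immediate, but the boundary case $d_1 = 1$ --- a sequence of at least three $1$s --- cannot be settled by the inequality by itself. Here I would invoke the parity condition: with $d_1 = d_2 = d_3 = 1$ we have $S = 3 + R$, and evenness of $S$ forces $R$ to be odd and hence $R \ge 1$, so $d_1 + R \ge 2$ after all. I would close by treating the degenerate small cases ($n \le 2$, or an all-zero sequence) separately, since the formula $\dmax' = \max(d_1 - 1, d_3)$ presupposes that $d_3$ exists.
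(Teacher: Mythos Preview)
Your overall strategy---reduce both directions to the algebraic criterion of \myref{Theorem}{thm:multigraphical} and split on whether $\dmax$ drops by one after the edge is added---is exactly the paper's approach, and your treatment of the reverse implication and of the case $\dmax' = \dmax - 1$ is fine.

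The gap is that you have misread the hypothesis. The notation $\mathbf{d} = (d_1,\, d_2 \ge d_3 \ge \cdots \ge d_n)$ means that vertex~1 is \emph{arbitrary} and only the remaining degrees are sorted; the paper says this explicitly just below the statement (``select an arbitrary vertex and connect it to a largest-degree one among the other vertices''). Hence $\dmax = \max(d_1, d_2)$, not $d_1$, and your identity $\dmax' = \max(d_1 - 1, d_3)$ is wrong when $d_1 < d_2$. Your case analysis therefore never sees the situation $d_1 < d_2 = d_3$, in which the new maximum is $d_3 = \dmax$ and yet $d_1, d_2, d_3$ are \emph{not} all equal. Your ``real content'' paragraph, which relies on $d_1 = d_2 = d_3$ to write $S = 3d_1 + R$, does not apply there.

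The repair is short and is essentially what the paper does: when $\dmax' = \dmax$ there must be a vertex of degree $\dmax$ outside $\{1,2\}$, so $d_2 = d_3 = \dmax$, and since vertex~1 is being connected we have $d_1 \ge 1$; thus $S \ge d_1 + d_2 + d_3 \ge 1 + 2\dmax > 2\dmax$, and evenness of $S$ gives $S \ge 2\dmax + 2$, i.e.\ $\tfrac{1}{2}(S-2) \ge \dmax = \dmax'$. This single inequality replaces your $d_1 + R \ge 2$ computation and covers both your case $d_1 = d_2 = d_3$ and the missing case $d_1 < d_2 = d_3$ uniformly, without a separate parity argument for $d_1 = 1$.
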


\noindent
In simpler terms, in order to construct a loopless multigraph, we may simply select an arbitrary vertex and connect it to a largest-degree one among the other vertices. Repeating this step results in a loopless multigraph if and only if the starting degree sequence was multigraphical. Unlike in the case of the Havel--Hakimi theorem, connections are made one edge at a time.

\begin{proof}
Clearly, if $\mathbf{d'}$ is multigraphical, then so is $\mathbf{d}$. Thus we need only show that the multigraphicality condition of \myref{theorem}{thm:multigraphical}, $\frac{1}{2} \sum_i d_i \ge \dmax$, is maintained after adding a connection between a maximal degree vertex and another vertex. Adding one connection decreases the left-hand-side of the inequality by 1. For the right-hand-side, there are two cases: (1)~If only one vertex had maximal degree, or if precisely two vertices had maximal degree and they were connected to each other, then the right-hand-side (i.e.\ \dmax) also decreases by 1, and the inequality is maintained. (2)~If there is more than one maximal degree vertex and the connection was made between a maximal degree and a non-maximal-degree vertex, then \dmax\ does not decrease. However, in this case, the sum of degrees in $\mathbf{d}$ includes \dmax\ twice, and at at least one more positive term due to the non-maximal-degree vertex. Therefore, $\sum_i d_i > 2\dmax \Leftrightarrow \sum_i d_i \ge 2(\dmax+1)$, so decreasing the left-hand-side by 1 will not violate the inequality.
\end{proof}

\noindent
We can also formulate the analogue of the \myref{theorem}{thm:conn-hh} for the loopless multigraph case:

\begin{thm}[connected loopless multigraph construction]
\label{thm:conn-cons-multi}
Let $\mathbf{d}$ be a multigraphical degree sequence, and let us repeatedly select the largest remaining degree vertex and the smallest non-zero remaining degree vertex, and connect them with a single edge. This construction procedure results in a connected graph if and only if $\mathbf{d}$ was potentially connected.
\end{thm}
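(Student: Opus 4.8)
The plan is to mirror the structure of the proof of \myref{theorem}{thm:conn-hh}, replacing the Havel--Hakimi step by the single-edge multigraph step and the graphical criterion by the multigraphical one. First I would dispose of the ``only if'' direction, which is immediate: if the procedure terminates with a connected graph, then that graph is by construction a connected realization of $\mathbf{d}$, so $\mathbf{d}$ is potentially connected by definition. I would also note at the outset that each step of this procedure is a special instance of the construction in \myref{theorem}{thm:cons-multi} (take the ``arbitrary'' vertex to be the smallest non-zero remaining degree one, so that it is joined to a largest-degree vertex among the rest). Hence multigraphicality is preserved at every step and, starting from a multigraphical $\mathbf{d}$, the procedure always runs to completion and outputs a valid loopless multigraph realizing $\mathbf{d}$. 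The entire remaining task is therefore to show that this output is \emph{connected} whenever $\mathbf{d}$ is potentially connected.

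For the ``if'' direction I would argue by induction on the number of remaining edges $\frac{1}{2}\sum_i d_i$, maintaining as an invariant that the remaining degree sequence (with zero-degree vertices discarded) stays potentially connected. The inductive payoff is a ``hang-off'' argument for the actual built graph: when an edge $(u,v)$ is placed between the largest-degree vertex $u$ and the smallest non-zero remaining degree vertex $v$, any vertex that drops to remaining degree zero is attached by this very edge to the other endpoint, which---except in the terminal case---still has positive remaining degree and therefore belongs to the inductively connected remainder. Concretely, if $v$ has remaining degree $1$ it is finished, but it is glued via $(u,v)$ to $u$, and $u$ keeps positive degree unless the maximum degree is also $1$; the only potentially connected sequence in which both endpoints vanish at once is $(1,1)$, which forms the base case (a single edge, trivially connected). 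Thus no finished vertex is ever stranded in its own component, and the whole graph is connected.

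The crux, and the step I expect to be the main obstacle, is verifying that a single step preserves the potential-connectedness inequality of \myref{lemma}{thm:pc}, namely $\frac{1}{2}\sum_i d_i \ge n-1$ with all degrees non-zero. Removing one edge lowers the left-hand side by $1$, so naively the bound could be lost; the resolution is a case split on the smallest non-zero remaining degree $\delta$. If $\delta = 1$, then the vertex $v$ is removed, the vertex count $n$ drops by one, and the required right-hand side $n-1$ also drops by one, so both sides decrease by $1$ and the inequality survives. If $\delta \ge 2$, then \emph{every} remaining vertex has degree at least $2$, giving the slack $\frac{1}{2}\sum_i d_i \ge n$; after removing one edge this reads $\ge n-1$, and since no vertex is removed the vertex count is unchanged, so again the inequality holds. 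Checking that all surviving degrees remain non-zero in each case, together with confirming $u \ne v$ (guaranteed as long as at least two active vertices remain, which potential connectedness ensures until the terminal $(1,1)$), completes the induction and hence the theorem.
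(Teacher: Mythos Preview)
Your proposal is correct and follows essentially the same route as the paper: both maintain the invariant that the remaining degree sequence stays potentially connected (via the same case split on whether the minimum non-zero degree is $1$ or at least $2$, applied to the inequality of \myref{lemma}{thm:pc}), and both conclude connectedness from the fact that at most one vertex is finished per step except at the terminal $(1,1)$. Your explicit appeal to \myref{theorem}{thm:cons-multi} for preservation of multigraphicality and your ``hang-off'' phrasing are minor presentational additions, not a different argument.
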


\begin{proof}
The proof is completely analogous to that of \myref{theorem}{thm:conn-hh}, and proceeds in three steps: (1)~We will show that after applying a single step of the construction process, the remaining degree sequence stays potentially connected. (2)~Therefore, when applying a single step of the construction process to a potentially connected degree sequence, the number of non-zero remaining degrees decreases by no more than one, except in the very last step. (3)~Consequently, reversing the order of steps constructs a connected graph.

To show that a single construction step keeps the degree sequence potentially connected, we must prove that the condition of \myref{lemma}{thm:pc}, $\frac{1}{2} \sum_i d_i \ge n-1$, is maintained. Since adding a single connection decreases the left-hand-side by 1, this inequality could only be violated if $\frac{1}{2} \sum_i d_i = n-1$ and $n$ (the number of non-zero degrees) does not decrease after a connection step. But since a smallest-degree vertex is always connected, this could only happen if none of the degrees are 1, i.e.\ $d_i \ge 2,\forall i$, which would imply that $\frac{1}{2} \sum_i d_i \ge n \Leftrightarrow \frac{1}{2} \sum_i d_i > n-1$.
\end{proof}

\begin{figure}[tb]
   \centering

   \includegraphics{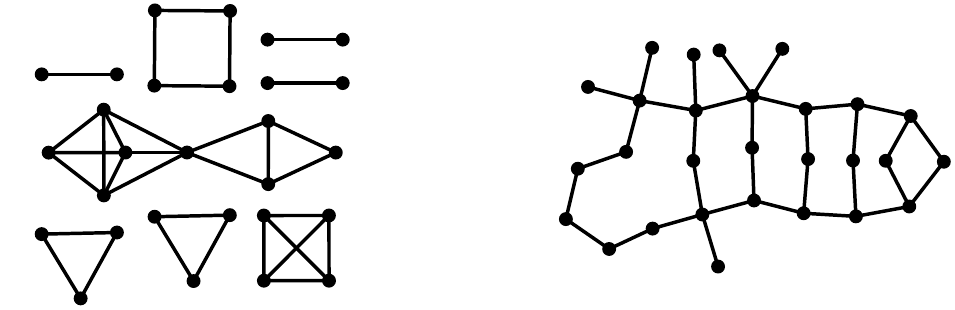} 
   
   \rule{16em}{0.5pt}
   \vspace{0.7em}
   
   \includegraphics{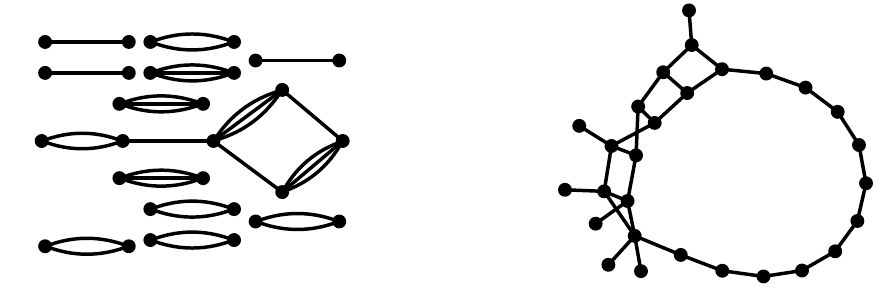} 
   
   \vspace{0.5em}   
   
   {\small $\mathbf{d} = (5, 4, 4, 4, 3, 3, 3, 3, 3, 3, 3, 2, 2, 2, 2, 2, 2, 2, 2, 2, 2, 2, 1, 1, 1, 1, 1, 1)$}
   
   \caption{Top: Two realizations of the same degree sequence constructed with the Havel--Hakimi algorithm, using two strategies for selecting the hub vertex in each step: Selecting the largest degree vertex leads to many components (left) while selecting the smallest degree vertex leads to a connected graph according to \myref{theorem}{thm:conn-hh} (right). Bottom: The same using the loopless multigraph construction procedure of \myref{theorem}{thm:cons-multi}.}
   \label{fig:conn-hh}
\end{figure}

There is a simple intuition behind the statements of theorems \ref{thm:conn-hh} and \ref{thm:conn-cons-multi}. If we were to always choose the highest degree vertex as the hub,
and connect it to other highest-degree vertices, it would quickly use up the available stubs. There would be insufficient stubs left towards the end of the procedure to connect all components together. Indeed, choosing highest-degree vertices as the hub tends to create graphs with multiple dense components (see \myref{figure}{fig:conn-hh}). Conversely, choosing smallest-degree vertices as the hub and connecting them to highest-degree vertices leaves free stubs available. 
The same intuition raises the question: does the largest-first variant of the algorithm always build a non-connected realization, if one exists? The answer turns out to be no. A counterexample is $\mathbf{d} = (3,2,2,2,2,2,1)$, which can be split into two graphical degree sequences $(3,2,2,1)$ and $(2,2,2)$, therefore it has a non-connected realization. Yet the largest-first Havel--Hakimi algorithm can only construct a connected one as it must connect the vertex of degree 3 to three degree-2 vertices. To the best of our knowledge, finding the computational complexity of deciding whether a degree sequence has a non-connected realization as a simple graph, i.e.\ whether it is \emph{forcibly connected}, is still an open problem.  We are not aware of any polynomial-time solutions.  An exponential time algorithm was given by Wang \cite{Wang2018}.

We have contributed an implementation of the construction algorithms for connected simple graphs and connected loopless multigraphs to the igraph C library \cite{Csardi2006} as \texttt{igraph\_{\allowbreak}realize\_{\allowbreak}degree\_{\allowbreak}sequence()}, and made it conveniently accessible through igraph's \textit{Mathematica} interface, IGraph/M \cite{IGraphM}, as the \texttt{IGRealizeDegreeSequence} function. In python-igraph it will be available as the \texttt{hh} method of \texttt{Graph.Degree\_Sequence}.

\section{An exact biased sampling method}
\label{sec:biased}

Recently, a new family of stub-matching sampling methods was proposed \cite{Kim2009,DelGenio2010,Kim2012,Blitzstein2010,Bassler2015}, which construct each sample directly and independently (unlike edge-switching MCMC methods) and work efficiently in polynomial time (unlike the configuration model). These algorithms do not sample uniformly, but they can compute the exact probability of obtaining a sample at the same time as generating that sample. This makes it possible to ``unbias'' the samples and estimate any property that characterizes the entire set of realizations of a degree sequence, such as the averages of various graph metrics, similarly to how one might do with uniform sampling. Let $\mathcal{S} = \{G_1, G_2, \dots, G_K \}$ be the set of generated samples, and let $c(G)$ denote some numerical property of the graph $G$, such as its diameter, assortativity, clustering coefficient, etc. If the sampling is uniform, we can estimate the average of $c$ over all realizations as 
\begin{equation}
\langle c \rangle \approx \frac{1}{K} \sum_{i=1}^K c(G_i) .
\end{equation}
If the sampling is biased, i.e.\ some graphs are generated with a higher probability $p(G)$ than others, then we can re-weight them with $1/p(G)$ to estimate $\langle c \rangle$ as
\begin{equation}
\langle c \rangle \approx 
\left. \Biggl( \sum_i \frac{c(G_i)}{p(G_i)} \Biggr)
\middle/
\Biggl( \sum_i \frac{1}{p(G_i)} \Biggr)
\right.\!\!.
\label{eq:unbias}
\end{equation}
The same formula can be used if we do not have normalized probability values, but merely \emph{sampling weights} $w(G) \sim p(G)$ which are proportional to the probabilities. This is the same principle as the one used in importance sampling.


To illustrate how this class of sampling methods works, let us consider the configuration model again, which pairs the stubs randomly.  Along the same lines, we can exhaustively generate \emph{all realizations} of a degree sequence by connecting up the stubs in all possible ways. This procedure can be thought of as a tree of decisions, like the one shown in \myref{figure}{fig:dectree}: If there are $k=\sum_i d_i$ stubs in total, there will be $k-1$ choices for connecting up the first stub. This is represented by the $k-1$ branches of the tree starting from its root. In the next step (corresponding to the next level of the tree), there will be $k-3$ choices, then $k-5$, and so on.  The leaves of the decision tree represent the fully constructed graphs.

\begin{figure}[tb!]
   \centering
   \includegraphics{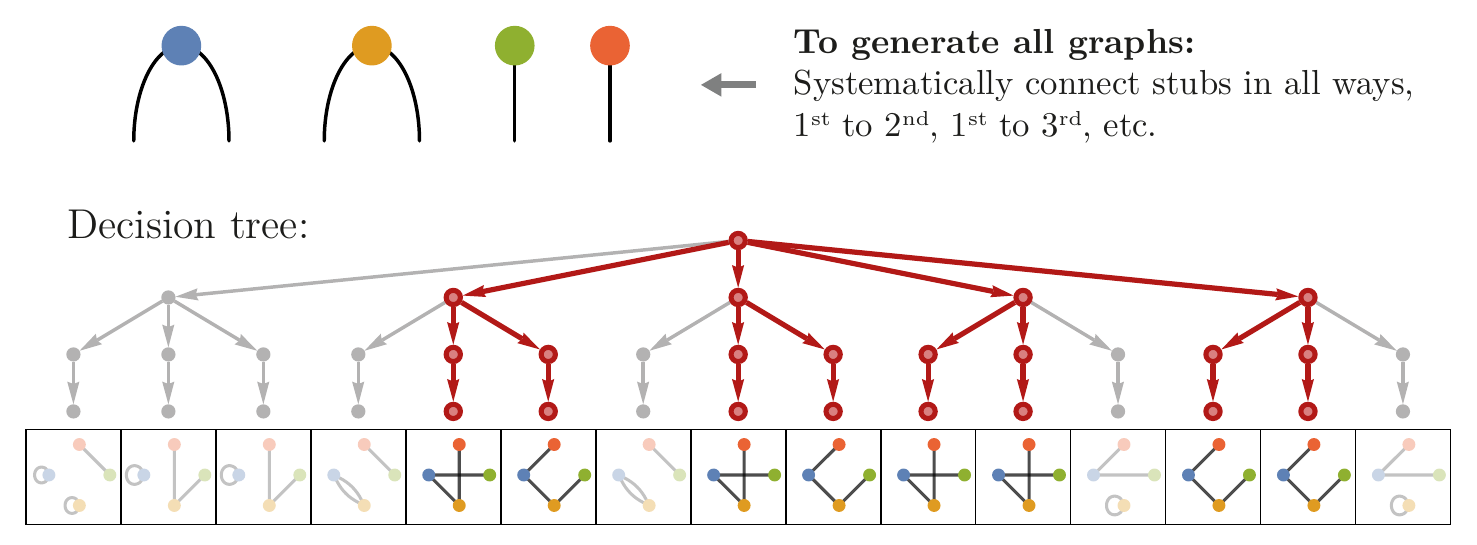} 
   \caption{The decision tree for connecting up the stubs corresponding to the degree sequence $(2,2,1,1)$ in all possible ways. The leaves of the tree represent the labelled graphs that can be obtained by this construction. Only the branches highlighted in red (the \emph{feasible subtree}) lead to simple graphs.}
   \label{fig:dectree}
\end{figure}

The configuration model's algorithm can be thought of as traversing the decision tree randomly, starting at its root, choosing branches uniformly at random at each branching point, and finally arriving at a leaf. This decision tree is symmetric: all tree nodes $i$ steps away from the root (i.e.\ at level $i$ of the tree) have the same number of branches, $k-(2i+1)$. Therefore, each leaf is reached with the same probability $p = \frac{1}{k-1} \times \frac{1}{k-3} \times \cdots = \frac{1}{(k-1)!!}$, where $n!! = n (n-2) (n-4) \cdots$~denotes the double factorial. While each labelled graph appears as more than one tree leaf, all \emph{simple} realizations appear the same number of times, with multiplicity $\prod_i (d_i!)$. This explains why the configuration model samples uniformly if non-simple outcomes are rejected. If we admit loopless multigraph outcomes as well, then the number of leaves that a graph appears as decreases by a factor of $\prod_{i<j} (a_{ij}!)$, where $a_{ij}$ denotes the number of edges between vertices $i$ and $j$ \cite{NewmanBook}.

The part of the decision tree that leads to simple graphs is highlighted in red in \myref{figure}{fig:dectree}. The core idea behind this new class of sampling methods is to traverse only this \emph{feasible subtree}. The feasible subtree is not, in general, symmetric, therefore its leaves will not be sampled uniformly. However, the inverse sampling weight of a leaf can be computed by multiplying the number of feasible branches at each branching point on the path going from the tree root to the leaf. If not all graphs appear as the same number of leaves (as is the case with multigraphs), then the sampling weights used in Eq.~\eref{eq:unbias} must be divided by the appropriate multiplicity.
Through this approach, it is straightforward to take any algorithm that systematically generates all realizations of a degree sequence, and convert it into a random sampling algorithm. Instead of traversing all branches in its decision tree, simply pick a random branch to follow at every step. In order for such an algorithm to be efficient and practical, the following requirements must be met: (1)~the multiplicity of each graph, i.e.\ the number of leaves that correspond to it, must be computable (2)~it must be possible to count the feasible branches at each branching point, and select one of them efficiently. We note that depending on the exhaustive generation algorithm that the sampling is based on, it may be the case that some leaves of the decision tree that correspond to the same graph will have different sampling weights. However, Eq.~\eref{eq:unbias} is still valid for estimating population averages.

A natural generalization of this method is to choose decision branches non-uniformly. This gives an opportunity to reduce the bias of the sampling. If each branch were chosen with probability proportional to the number of leaves it contains, then the sampling would be uniform. While computing the exact number of leaves is a difficult combinatorial problem that may not be efficiently solvable, the sampling can be improved through heuristic choices of the branch probabilities. This idea is explored in more detail in \cite{BasslerNew}. For all the numerical examples discussed in \myref{section}{sec:numerics}, we weighted the branches of the decision tree using a simple heuristic that is described in \ref{apd:heur}.

Here, we choose to work with the decision tree of the exhaustive generation algorithm described above and illustrated in \myref{figure}{fig:dectree}: take the stubs one-by-one, in order, processing all stubs of a vertex before moving on to the next, and consider all possible ways each stub can be connected. This decision tree has $O(m)$ branches at each branching point of each level, where $m = \frac{1}{2} \sum_{i=1}^n d_i$ denotes the number of edges in the constructed graph.  Since the stubs of a vertex are indistinguishable, only $O(n)$ of these branches are distinct. Thus, enumerating each individual branch and testing it for feasibility becomes computationally tractable. Since there are $O(m)$ levels in the tree, the sampling algorithm performs $O(n m)$ feasibility checks during the construction of a graph. For each branch, we must perform two checks: one of graphicality (or multigraphicality) and one of potential connectedness. In the following, we show that both of these checks can be done in constant computational time on average. Therefore, in summary, the computational time required to generate one sample is $O(n m)$, where $n$ is the number of vertices and $m$ is the number of edges of the generated graph.

\textbf{The constraint of graphicality.} When examining the feasibility of a branch, first we must determine if it leads to any simple graphs. This check is similar to the usual graphicality test, with an important difference: Suppose that some stubs of vertex $i$ (the ``hub vertex'') have already been connected to vertices $X = \{j_1, \dots, j_k\}$, but it still has some free stubs. In order to obtain a simple graph, a second connection is not allowed to the vertices in the set $X$. This restriction is referred to as a \emph{star constraint} on $i$, as the connections from $i$ to the elements of $X$ form a star graph. To check graphicality under this constraint, we use the following theorem:

\begin{thm}[star-constrained graphicality \cite{Kim2009}]
\label{thm:scg}
Let $\mathbf{d} = (d_1 = \Delta, d_2 \ge d_3 \ge \cdots \ge d_n)$ be a degree sequence and let $X = \{j_1, j_2, \dots, j_k\}$, with $k \le n-1-\Delta$, be an ``exclusion set'' of vertices to which we forbid connections from vertex 1. Let us connect all stubs of vertex 1 to the $\Delta$ largest-degree vertices not present in $X$, obtaining the remaining degree sequence $\mathbf{d}'$. The degree sequence $\mathbf{d}$ can be realized by a simple graph respecting the exclusion set $X$ if and only if $\mathbf{d}'$ is graphical.
\noclub[2] 
\end{thm}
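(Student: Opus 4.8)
The plan is to treat this as a star-constrained refinement of the Havel--Hakimi theorem (\myref{theorem}{thm:hh}) and to prove the two implications separately. Write $T$ for the \emph{target set}: the $\Delta$ largest-degree vertices among $\{2,\dots,n\}\setminus X$, which is well defined because the hypothesis $k \le n-1-\Delta$ guarantees that at least $\Delta$ vertices lie outside $X\cup\{1\}$. The easy direction assumes $\mathbf{d}'$ is graphical: I would take any simple realization of $\mathbf{d}'$ on the vertices $\{2,\dots,n\}$ and reintroduce vertex 1, joining it to exactly the vertices of $T$. The added edges are all new and incident to the previously absent vertex 1, so no multi-edge or loop is created, the resulting degrees are precisely $\mathbf{d}$, and since $T\cap X=\emptyset$ the exclusion set is respected.

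For the converse I would argue, exactly as in the classical Havel--Hakimi proof, that any simple realization $G$ of $\mathbf{d}$ respecting $X$ can be transformed, by degree-preserving edge swaps, into one in which the neighbourhood $N(1)$ of vertex 1 is exactly $T$; deleting vertex 1 from that graph then exhibits a simple realization of $\mathbf{d}'$. Since vertex 1 has degree $\Delta=|T|$ and is forbidden from $X$, it suffices to drive the overlap $|N(1)\cap T|$ up to $\Delta$. If $N(1)\ne T$, pick $v\in T\setminus N(1)$ and $u\in N(1)\setminus T$; because every neighbour of vertex 1 lies outside $X$ and $T$ collects the highest degrees outside $X$, we have $d_v\ge d_u$. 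I would then locate a vertex $w$ adjacent to $v$ but not to $u$, with $w\notin\{1,u,v\}$, and replace the edges $(1,u),(v,w)$ by $(1,v),(u,w)$. This preserves all degrees, and since it merely exchanges the neighbour $u$ of vertex 1 for $v$ --- both outside $X$ --- it keeps the graph simple and the exclusion constraint intact while strictly increasing $|N(1)\cap T|$. Iterating terminates after finitely many swaps with $N(1)=T$.

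The main obstacle is the existence of the swap partner $w$, which is the only place a careful count is needed. The idea is that $1\in N(u)\setminus N(v)$ gives $|N(u)\setminus N(v)|\ge 1$, and combined with $|N(v)|=d_v\ge d_u=|N(u)|$ this forces $|N(v)\setminus N(u)|\ge 1$; one must then check that a usable $w$ survives after excluding the forbidden labels $1,u,v$. The candidate $1$ is never in $N(v)\setminus N(u)$ (vertex 1 is not adjacent to $v$) and $v$ is ruled out by the absence of loops, so the only genuine difficulty is when $u$ and $v$ happen to be adjacent, in which case $u$ itself lies in $N(v)\setminus N(u)$. Here I would sharpen the count: adjacency of $u,v$ puts both $1$ and $v$ into $N(u)\setminus N(v)$, so $|N(u)\setminus N(v)|\ge 2$, hence $|N(v)\setminus N(u)|\ge 2$, leaving at least one admissible $w\ne u$. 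With $w$ secured in every case, the swap is always available, which completes the induction and the theorem.
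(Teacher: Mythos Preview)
The paper does not actually prove this theorem: it is quoted from \cite{Kim2009} and followed only by the remark that it reduces to the ordinary Havel--Hakimi theorem when $X=\varnothing$. There is therefore no in-paper proof to compare against.

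That said, your argument is correct. The easy direction is immediate, and in the swap direction your treatment of the one delicate point---the existence of an admissible swap partner $w\in N(v)\setminus N(u)$ distinct from $1,u,v$---is clean: using $1\in N(u)\setminus N(v)$ together with $d_v\ge d_u$ to force $|N(v)\setminus N(u)|\ge 1$, and then, in the adjacent case $u\sim v$, observing that both $1$ and $v$ lie in $N(u)\setminus N(v)$ to push the count to $|N(v)\setminus N(u)|\ge 2$ and so salvage a $w\ne u$. The swap $(1,u),(v,w)\mapsto(1,v),(u,w)$ is then valid, keeps vertex~1 away from $X$ (since $v\in T$), and strictly increases $|N(1)\cap T|$. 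This is exactly the classical Havel--Hakimi swap argument with the exclusion set carried along, which is also the approach taken in \cite{Kim2009}.
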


\noindent
Notice that this is a generalization of the Havel--Hakimi theorem, which corresponds to the special case of $X=\varnothing$, i.e.\ no exclusion. 
The graphicality of $\mathbf{d'}$ can be tested using the Erd\H{o}s--Gallai theorem, making the entire test possible in $O(n)$ computational time. In principle, \myref{theorem}{thm:scg} could be used to test each branch of the decision tree separately, but this would not be efficient. A more sophisticated method is presented in \cite{DelGenio2010}, where it is shown that there exists a threshold degree $d_\text{th}$ that separates feasible branches from non-feasible ones. Connecting to a vertex with degree $d \ge d_\text{th}$ preserves graphicality while connecting to one with $d < d_\text{th}$ does not. $d_\text{th}$ may be determined in $O(n)$ time, thus testing the graphicality of individual branches becomes constant time on average. For a detailed description of this testing procedure, we refer the reader to \cite{DelGenio2010}.

\textbf{The constraint of multigraphicality.} If we wish to sample loopless multigraphs instead of simple graphs, \myref{theorem}{thm:multigraphical} can be used directly. This requires computing the degree sum, as well as the maximum degree. Instead of recomputing these quantities at each step, their values can be updated incrementally in amortized constant time after the addition of each new edge.

\begin{figure}[bt]
   \centering
   \includegraphics{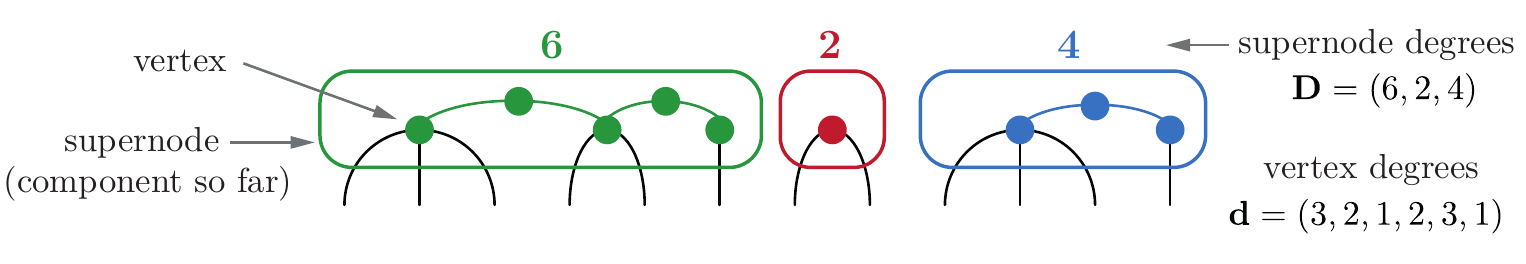} 
   \caption{As the wiring algorithm proceeds, sets of vertices that have already formed a connected component are grouped into ``supernodes''. Whether the remaining stubs can be wired up so as to make the entire graph connected can be decided by applying the potential connectedness theorem to the degree sequence of the supernodes (\myref{lemma}{thm:pc2}).}
   \label{fig:supernodes}
\end{figure}

\textbf{The constraint of connectedness.}  In order to incorporate the constraint of connectedness, we must find a way to detect decision branches which do not lead to any connected graphs. In other words, we must detect when adding a specific connection would make it impossible to build a connected graph. We do this by tracking the groups of vertices (components) which have so far been connected (\myref{figure}{fig:supernodes}). These components can also be thought of as the nodes of a multigraph, which we term the ``supergraph''. We refer to the components as ``supernodes''.  Then the construction process can be completed to a connected graph if and only if the supergraph is potentially connected.

The potential connectedness of the supergraph may be checked using \myref{lemma}{thm:pc}. Note that the supergraph does not need to be a simple graph, and indeed it is clear that when the supernode degrees are sufficiently large, it cannot be simple. It is not obvious that in such a situation it can be ensured that the graph of vertices is simple (or loopless) and the supergraph is connected at the same time. The following lemma asserts that this is indeed possible:

\begin{lem}[potential connectedness of supernodes]
\label{thm:pc2}
Let $\mathbf{D} = (D_1, \dots, D_N)$ be a degree sequence of $N$ supernodes, and let $\mathbf{d} = (d_1, d_2, \dots, d_n)$ be the degree sequence of the vertices making up these supernodes. If $\mathbf{d}$ is graphical (resp.\ multigraphical), the number of edges satisfies $m = \frac{1}{2} \sum_i D_i \ge N-1$ and $D_i \ne 0, \forall i$ or $N=1$, then there is a simple graph (resp.\ loopless multigraph) realization of $\mathbf{d}$ in which the supernodes form a connected graph.
\end{lem}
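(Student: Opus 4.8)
The plan is to prove sufficiency by starting from an arbitrary realization of $\mathbf{d}$ and repairing it with degree-preserving edge switches (\myref{figure}{fig:edge-switch}) until the supergraph becomes connected. The case $N=1$ is immediate, since a single supernode is trivially connected. For $N \ge 2$ the hypotheses read $m = \frac12\sum_i D_i \ge N-1$ and $D_i \ge 1$ for every $i$; writing $D_i = \sum_{j \in S_i} d_j$ for the total degree of the vertices in supernode $i$, these are precisely the conditions of \myref{lemma}{thm:pc} applied to $\mathbf{D}$, so the supergraph is at least potentially connected. Since $\mathbf{d}$ is graphical (resp.\ multigraphical), I would fix a simple (resp.\ loopless multigraph) realization $G$, contract each supernode to obtain the supergraph, and look at its connected components (``super-components''). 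If the supergraph is already connected we are done; otherwise the goal is to produce, by a single switch, a new realization of $\mathbf{d}$ whose supergraph has strictly fewer components, and then iterate.

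The heart of the argument is finding such a switch. The naive idea---take an internal edge of one super-component $C_1$ and an internal edge of another super-component $C_2$ and swap endpoints to create two crossing edges---fails in general, because deleting an internal edge may itself split $C_1$ (for instance, switching the two disjoint edges $1\!-\!2,\,3\!-\!4$ of $(1,1,1,1)$ merely produces $1\!-\!3,\,2\!-\!4$, again disconnected; this is an instance excluded by the hypothesis, as here $m=2<3=N-1$). The remedy, and the only place the inequality $m \ge N-1$ is used, is a counting argument: if the supergraph has $t \ge 2$ components, a spanning forest of it uses exactly $N-t$ (non-loop) edges, while the total edge count is $m \ge N-1 > N-t$, so at least $m-(N-t) \ge t-1 \ge 1$ edges are ``surplus'' (self-loops of the supergraph, parallel edges, or cycle-closing edges). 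Any such surplus edge can be deleted without disconnecting its super-component. I would pick one surplus edge $e_1=(a,b)$ lying in some super-component $C_1$, together with an arbitrary edge $e_2=(c,d)$ of any other super-component $C_2$; such an $e_2$ exists because $D_i \ge 1$ forces every super-component to carry positive degree, hence to contain at least one internal edge.

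I would then switch $e_1,e_2$ into $(a,c),(b,d)$. Degrees are preserved by construction, and the four vertices are distinct because $C_1\cap C_2=\varnothing$, so no self-loop is created; in the simple case the new edges cannot be multi-edges either, precisely because no edge ran between $C_1$ and $C_2$ before the switch. Both new edges join $C_1$ to $C_2$: deleting the surplus edge $e_1$ leaves the supernodes of $C_1$ mutually connected, and even if $e_2$ happened to be a bridge of $C_2$, splitting it into the pieces containing $c$ and $d$, the edges $(a,c)$ and $(b,d)$ reattach both pieces through the still-connected $C_1$. Hence $C_1$ and $C_2$ merge while all other components are untouched, and the number of super-components drops by one. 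Iterating drives this count down to one, giving the desired realization; the loopless-multigraph case is identical and in fact easier, since only the no-self-loop check is required. I expect the main obstacle to be the counting step that guarantees a removable ``surplus'' edge, since it is exactly what stops the repair from stalling and is the reason the edge-count hypothesis $m \ge N-1$ cannot be dropped.
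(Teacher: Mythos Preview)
Your proposal is correct and follows essentially the same approach as the paper: start from an arbitrary simple (resp.\ loopless) realization of $\mathbf{d}$, locate a ``surplus'' edge in one super-component via the edge-count inequality $m\ge N-1$, and switch it with any edge of another super-component to merge the two without creating self-loops or multi-edges. The paper's own proof is a two-sentence sketch that simply declares the argument ``analogous'' to that of \myref{lemma}{thm:pc}; your write-up makes explicit the two points that sketch leaves implicit---namely, why the new edges cannot be parallel to existing ones (they cross between previously disconnected super-components) and why the merge succeeds even when $e_2$ is a bridge of $C_2$---so in that sense your version is more complete than the paper's.
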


\noindent
The proof is given in \ref{apd:pc}.

Note that there are two ways in which the potential connectedness of the supergraph can be broken: (1)~there may no longer be a sufficient number of edges left to make the graph connected, i.e.\ $m < N-1$, or (2)~one of the supernodes (components) may become ``closed'', i.e.\ its degree may become zero before the graph is fully constructed. To check whether adding a connection would give rise to either of these two conditions, we must consider several cases: If there is only one supernode, then the graph is already connected, therefore all connections are allowed. Otherwise, if $m = N-1$, then only connections between different supernodes are allowed. Two supernodes with degree 1 each may not connect to each other, and a supernode with degree 2 may not connect to itself, except as the very last step that completed the graph. To check for these cases, we must determine if the two vertices to be connected are within the same supernode. This can be done in constant amortized time, as described in \ref{apd:conn}.

\section{Numerical results}
\label{sec:numerics}

To demonstrate the practical applicability of our proposed sampling method for connected graphs, we performed numerical experiments on degree sequences sampled from a power-law distribution. Networks with similarly heavy-tailed degree distributions commonly occur in the real world \cite{Voitalov2019,Broido2019}.  The exponent of the power-law distribution was adjusted so as to obtain a degree sequence which, while potentially connected, has overwhelmingly many non-connected realizations. Sampling its connected realizations is therefore not feasible at all with the configuration model: in practice it never generates any connected samples. Thus, we compare results with MCMC samplers.

\begin{figure}[htb]
   \centering
   \includegraphics{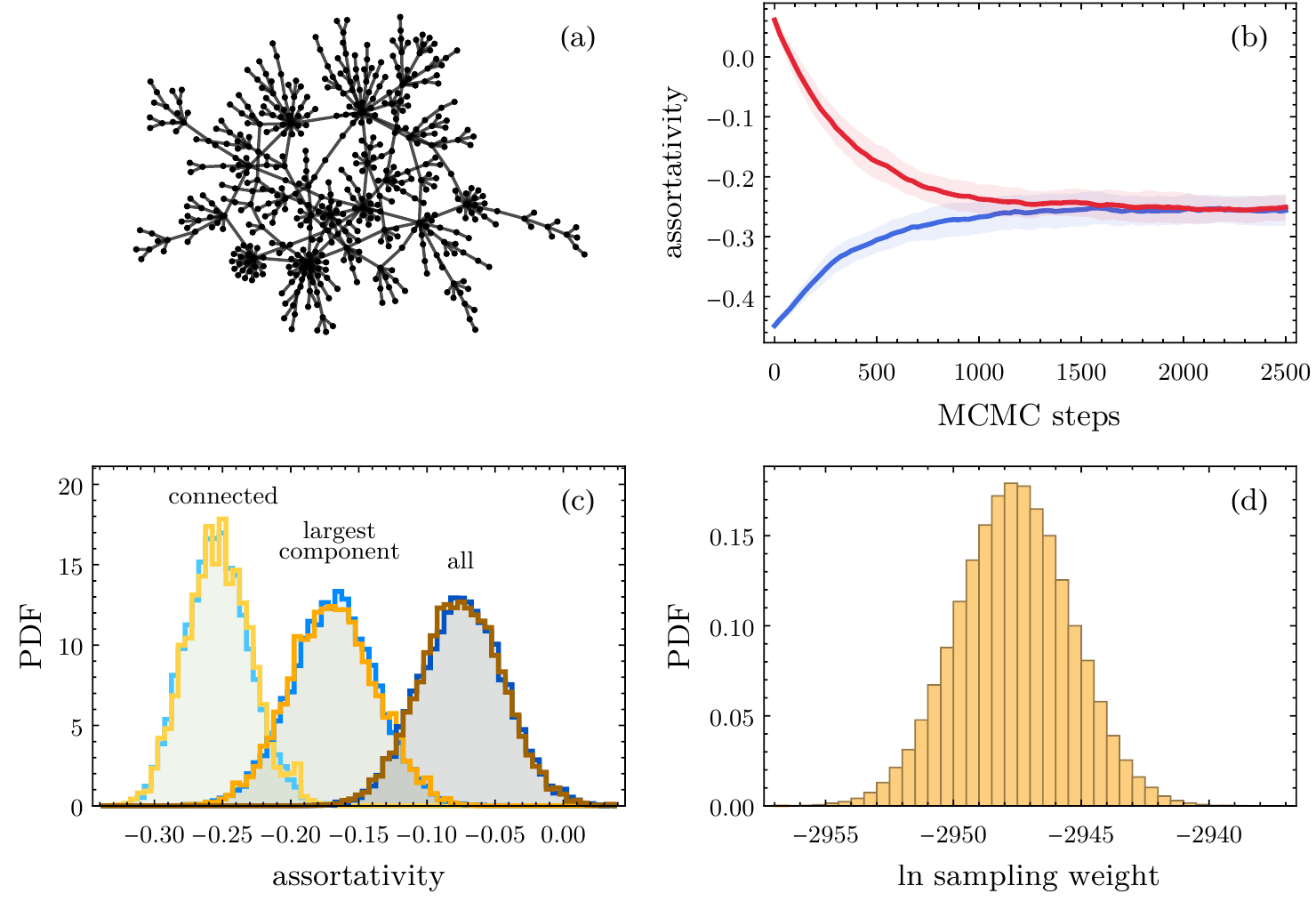} 
   \caption{\textbf{(a)} A typical connected realization of the degree sequence used in the following panels (500 vertices, 519 edges). The degrees were sampled from a power-law distribution with exponent $1.3$. \textbf{(b)} Evolution of a high- and a low-assortativity realization of the degree sequence during repeated applications of random connectivity-preserving edge switches. \textbf{(c)} The distribution of assortativity values is markedly different when sampling from connected realizations, sampling from all realizations, or sampling from all realizations and taking the largest component. The blue histograms were obtained with MCMC samplers
   while the yellow/brown ones with the biased stub-matching samplers. 
   \textbf{(d)} The distribution of the natural logarithms of the sampling weights when using the connected biased sampler.}
   \label{fig:example}
\end{figure}

\myrefp{Figure}{fig:example}{a} shows one typical simple connected realization of such a degree sequence. This degree sequence was used to generate the results shown in the subsequent panels of the same figure. We chose \emph{assortativity}, a measure of degree correlations \cite{Newman2002}, as the graph property to study. \myrefp{Figure}{fig:example}{b} illustrates how the value of this measure develops while running an edge-switching MCMC sampler for simple connected graphs. Two trajectories are shown: one starting with a high- and one with a low-assortativity graph. In this experiment, at least 1500-2000 edge switches were needed before the two trajectories converged, an indicator of reaching statistical independence. Based on this, in the following numerical experiments 2500 steps were performed between taking samples from the Markov chain. In general, the number of steps which are required to guarantee a given level of independence cannot be determined exactly---this is precisely the problem that the biased stub-matching sampler introduced in this work is meant to overcome.

\myrefp{Figure}{fig:example}{c} compares the distribution of assortativity estimated using the MCMC sampler (blue curves) with the one obtained using the biased stub-matching sampler (yellow/brown curves), and demonstrates that both methods produce the same result. This validates our implementation of the method. The histogram of a biased sample is formed not by counting the number of data points in each bin, but by adding up their inverse sampling weights. The result shown in panel \myrefp{figure}{fig:example}{c} comes from three separate experiments: In the first, only connected realizations were sampled. In the second, connectedness was not constrained. In the third, connectedness was also not constrained, but assortativity was measured only on the largest connected component (the ``giant component'') of the graph. We included the third case because retaining only the giant component is often used as an ad-hoc substitute for incorporating the constraint of connectedness into random graph models \cite{Viger2015}.  The assortativity distributions are markedly different for all three cases, demonstrating the importance of taking connectedness into account when the problem at hand demands it. We note that with some degree distributions, simply taking the giant component of non-connected samples produces results similar to enforcing connectedness. However, as \myrefp{figure}{fig:example}{c} demonstrates, with some other degree sequences there can be a significant difference.

\begin{table}[tb]
   \centering
   \footnotesize
   \begin{tabular}{lSSSS}
      \toprule
      &
      \multicolumn{2}{c}{Connected realizations} &
      \multicolumn{2}{c}{All realizations} \\ 
      \cmidrule(lr){2-3}
      \cmidrule(lr){4-5}

      &
      \multicolumn{1}{c}{MCMC}   &
      \multicolumn{1}{c}{biased} &
      \multicolumn{1}{c}{MCMC}   &
      \multicolumn{1}{c}{biased} \\

      \midrule
      mean      & -0.2558 \pm 0.0003 & -0.2550 \pm 0.0008  
                & -0.0769 \pm 0.0003 & -0.0777 \pm 0.0003 \\
      std.~dev. &  0.0236 \pm 0.0002 &  0.0241 \pm 0.0007 
                &  0.0308 \pm 0.0002 &  0.0306 \pm 0.0002 \\
      skewness  &  0.18   \pm 0.02   &  0.17   \pm 0.07  
                &  0.01   \pm 0.02   &  0.00   \pm 0.02 \\
      kurtosis  &  2.93   \pm 0.05   &  3.03   \pm 0.10  
                &  2.90   \pm 0.04   &  2.96   \pm 0.05  \\
      \bottomrule
   \end{tabular}
   \caption{The first four statistical moments of the assortativity distributions shown in \myrefp{figure}{fig:example}{c}, as estimated with MCMC and with the biased stub-matching sampler. Standard errors obtained with bootstrapping and are indicated in parentheses.}
   \label{tab:moments}
\end{table}

Estimates of four statistical moments of the distributions---their mean, standard deviation, skewness and kurtosis---are reported in \myref{table}{tab:moments} along with their standard errors. We note that the number of samples required for an accurate estimate of statistical quantities is larger when using biased sampling than with uniform sampling. This is not dissimilar from how the effective sample size of the correlated output of an MCMC sampler is also smaller than the number of generated data points. Therefore, when generating the histograms in \myrefp{figure}{fig:example}{c}, we took 10\,000 samples from the Markov chain (at intervals of 2500 steps) and 100\,000 samples from the biased sampler. In the case of the biased sampler described here, the distribution of sample weights is typically bell-shaped on a logarithmic scale, as shown in \myrefp{figure}{fig:example}{d}. This is expected, since sample weights are the inverse products of the number of feasible branches encountered at each level while traversing the decision tree. If the number of branches were random, the distribution of weights would be log-normal according to the central limit theorem.

Finally, as an example application of the method, we investigate the properties of two connected real-world networks by comparing them to a null model with degree and connectedness constraints (\myref{figure}{fig:emp}). Both of these networks are sufficiently sparse so that most realizations of their degree sequences are disconnected. Therefore, the connectedness constraint cannot be handled with simple rejection. The first network is the equivalenced representation of the Western US power network, from the Harwell--Boeing sparse matrix collection (443 vertices, 590 edges) \cite{PowerGrid}. As a power grid, it is naturally connected. We investigate its \emph{global efficiency}, defined as the average of the inverse of pairwise shortest path lengths between its vertices \cite{Latora2001}. \myrefp{Figure}{fig:emp}{a} shows that the efficiency of this network is significantly lower than that of typical realizations of its degree sequence. This hints at the existence of another dominant constraint, which we surmise to be the spatially embedded nature of the network.  Typical connected realizations have higher efficiency than non-connected ones. As the second example, we investigate the degree assortativity in the largest connected component of the protein-protein interaction network of the yeast \emph{Saccharomyces cerevisiae} (964 vertices, 1487 edges) \cite{Yu2008,YeastInteractome}.  Random networks with the same degrees become more disassortative (have higher negative assortativity) when forced to be connected, but still not as disassortative as the empirical network. This shows that high disassortativity is a special property of this network.

\begin{figure}[tb]
   \centering
   \includegraphics{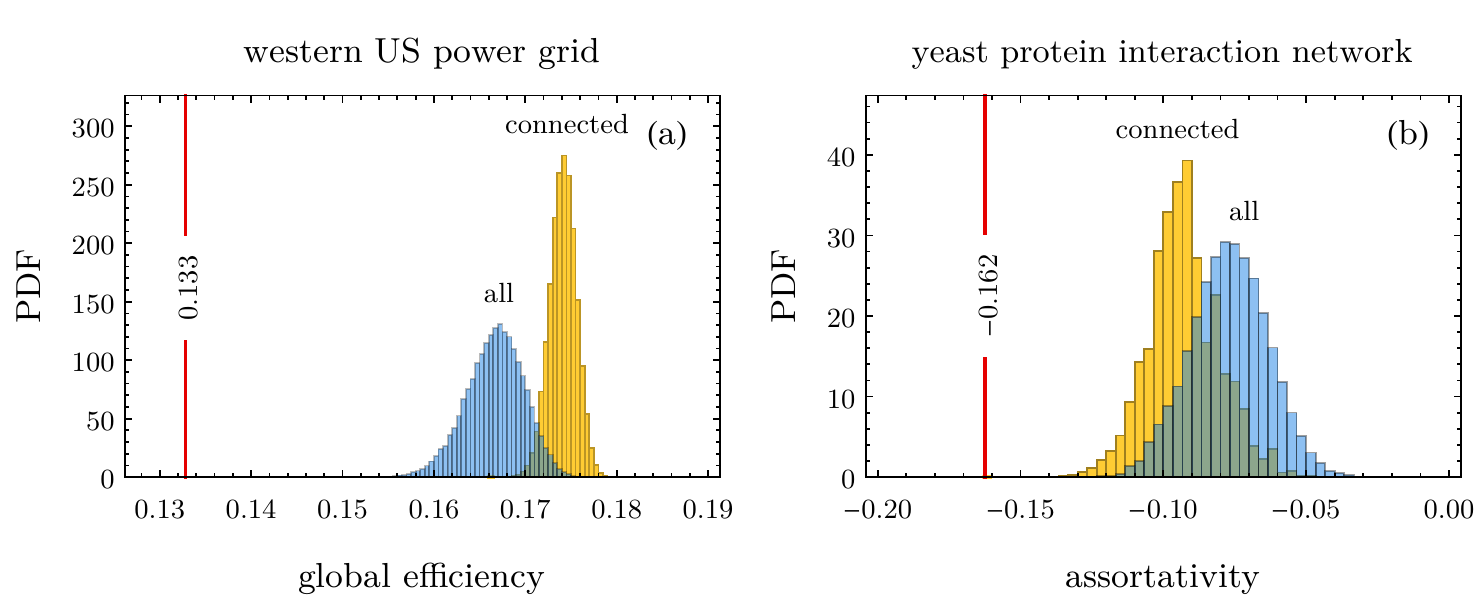} 
   \caption{The distribution of the values of selected graph measures (global efficiency and assortativity) in either the connected realizations (yellow) or in all realizations (blue) of the degree sequences of two connected real-world networks. The vertical red line shows the value of the graph measure for the empirical network itself. The histograms were created via importance sampling.}
   \label{fig:emp}
\end{figure}

\section{Discussion}
\label{sec:discussion}

In this paper we considered the problem of constructing a single realization of a \emph{connected} graph with a given degree sequence, as well as random sampling from the set of all connected realizations. We addressed both the case of simple graphs, as well as loopless multigraphs. The main contribution of this work is incorporating the constraint of connectedness.

Building a not-necessarily-connected realization of a degree sequence as a simple graph can be accomplished using the well-known Havel--Hakimi construction. Until now, the usual method to construct a connected realization was to first build an arbitrary realization, then rewire its edges to make it connected. This method is complicated and cumbersome to implement. With \myref{theorem}{thm:conn-hh}, we show that a specific variant of the Havel--Hakimi construction is guaranteed to produce a connected realization, if one exists. Furthermore, in \myref{theorem}{thm:conn-cons-multi} we generalize the construction to the case of loopless multigraphs. This provides a simple and elegant algorithm for building connected graphs with given degrees. We contributed an implementation of these algorithms to the open-source igraph library \cite{Csardi2006} and its \textit{Mathematica} interface, IGraph/M \cite{IGraphM}.

We have also extended a new family of biased-sampling stub matching methods so that they incorporate the constraint of connectedness without a performance penalty, allowing for fast, efficient rendering of null models and random sampling. Indeed, our approach is significantly faster than the configuration model, which is simply infeasible to use in some regimes of degree sequences. Our algorithm generates each sample in computational time $O(nm)$, where $n$ is the number of vertices and $m$ is the number of edges. Unlike edge-switching MCMC methods, the mixing time of which are not currently known, our method suffers no uncertainty or ambiguity in the independence of the samples. In this sense it is exact. This is of particular importance, again, for the rendering of reliable null models that faithfully represent generic networks of a certain type. An implementation of our sampling method is made freely available at \url{https://github.com/szhorvat/ConnectedGraphSampler}. Finally, we have demonstrated these methods both on generated scale-free degree sequences, as well as on degree sequences of real-world networks. The connected realizations of all of these are markedly different from the non-connected ones, illustrating the relevance of the connectedness constraint. This is consistent with earlier approximate results obtained with heuristic samplers whose bias was not controlled \cite{Ring2020}.  In all these examples, the use of the configuration model would have been simply infeasible.

We reiterate that these approaches are crucially important due to the pressing need for efficient, appropriate null models across the network and complexity sciences. While the general problem of multi-constraint null model construction and random sampling in random graph models remains open, connectedness is such a ubiquitous feature of real networks and graphs of potential interest that we hope our simple and powerful approach to building connected null models and performing random sampling will find wide applicability. Ultimately, reaching a state in which validation of new findings against numerical control experiments is the standard must be a critical goal for the field as a whole, and further progress in multi-constraint sampling is the only way forward.

\ack 

We thank Christoph Zechner, Benjamin Friedrich, \'Agnes T\'oth-Petr\'oczy and Steffen Rulands for helpful comments on the manuscript. Additionally, Sz.~Horv\'at would like to acknowledge useful discussions with Kevin Bassler, Zolt\'an Toroczkai and \'Eva Czabarka.

\appendix

\renewcommand\thesection{Appendix~\Alph{section}} 

\renewcommand\thefigure{A\arabic{figure}}

\section{Proofs of theorems}
\label{apd:proofs}

\subsection{Multigraphicality}
\label{apd:multigraphical}

\begin{proof}[Proof of \myref{theorem}{thm:multigraphical}]
In order to avoid self-loops, no degree may be larger than the sum of all other degrees: $\sum_{i\ne k} d_i \ge d_k, \, \forall \, 1 \le k \le n$. This condition can be rewritten as $\left( \sum_{i=1}^n d_i \right) - d_k \ge d_k, \forall k \Leftrightarrow \sum_{i=1}^n d_i \ge 2d_k, \forall k \Leftrightarrow \sum_{i=1}^n d_i \ge 2\dmax$. Therefore, $\frac{1}{2} \sum_{i=1}^n d_i \ge \dmax$ is a necessary condition of multigraphicality. Now we show it is also sufficient, assuming that the degree sequence has an even sum. Any even-sum degree sequence can be realized as a graph that contains self-loops. Let us consider such a loopy realization and assume that vertex $v$, with degree $d_v$, has at least one self-loop. This self-loop may be eliminated by performing a degree-preserving edge switch (\myref{figure}{fig:edge-switch}) between it and another edge that is not incident to $v$. Such a non-incident edge exists because the total number of edges in the graph is $\frac{1}{2}\sum_i d_i \ge d_v$, while $v$ has at most $d_v - 1$ incident edges due to its self-loop. Therefore, all self-loops can be eliminated using a sequence of degree-preserving edge switches.
\end{proof}

\subsection{Potential connectedness}
\label{apd:pc}

\begin{proof}[Proof of \myref{lemma}{thm:pc}]
The condition of potential connectedness and its proof are well-known, and generally found in graph theory textbooks. We reproduce the proof here because of its relevance to \myref{lemma}{thm:pc2}, which is a generalization of \myref{lemma}{thm:pc}. The requirement that there be no zero-degree (i.e.\ isolated) vertices whenever there is more than one vertex is trivial, so from here on we assume that all degrees are positive. The number of edges in a graph with degree sequence $\mathbf{d} = (d_1, \dots, d_n)$ is $m = \frac{1}{2} \sum_{i=1}^n d_i$, thus we must show that potential connectedness is equivalent to having $m \ge n-1$ edges. First we show necessity. The smallest connected graph on $n$ vertices is a tree, and has precisely $m-1$ edges. This is because any tree with $n>1$ can be extended from a smaller tree by connecting a vertex to it with a single edge. Similarly, it is easy to see that a connected graph with more than $n-1$ edges must contain an edge whose removal does not disconnect it, called a \emph{cycle edge}.

Now we show that any non-connected graph with at least $n-1$ edges can be transformed into a connected one by a sequence of degree-preserving edge switches (\myref{figure}{fig:edge-switch}). Let us assume that we have a realization of $\mathbf{d}$ with two connected components, having vertex counts $n_1,\, n_2$ and edge counts $m_1,\, m_2$. Since $m_1 + m_2 = m \ge n-1 = n_1 + n_2 - 1$, and since both components are connected implying $m_1 \ge n_1 - 1$ and $m_2 \ge n_2 - 1$, one of the two must satisfy $m_1 > n_1 - 1$ or $m_2 > n_2 - 1$. Thus, one of the two components has a cycle edge. Switching this cycle edge with an arbitrary edge in the other component will connect the two components together, without creating multi-edges. Therefore, given a non-connected realization of a degree sequence satisfying $m \ge n-1$, we can connect its components together pair-by-pair, using degree-preserving edge switches. This concludes the proof of sufficiency.
\end{proof}

\noindent
\emph{Remark.} Potential connectedness is independent from graphicality and multigraphicality. These two properties do not constrain each other in the sense that the existence of a connected realization is the same regardless of whether we consider simple graphs, loopless multigraphs, or loopy multigraphs. This is not true for the related concept of \emph{forcibly connectedness}: a degree sequence that only has connected simple realizations may have non-connected multigraph realizations. An example is $(2,2,2,2)$, which is forcibly connected over simple graphs, but not over multigraphs.
\noclub[2]

\begin{proof}[Proof of \myref{lemma}{thm:pc2} (potential connectedness of supernodes)]
\myref{Lemma}{thm:pc2} is completely analogous to \myref{lemma}{thm:pc} with the difference that connectedness is considered over the supernodes (with degrees $\mathbf{D}$), while graphicality (or multigraphicality) is considered over the vertices (with degrees $\mathbf{d}$). The proof is analogous as well: we consider a simple graph (or loopless multigraph) realization of $\mathbf{d}$. This is also a realization of $\mathbf{D}$, although usually a non-simple one. It can be shown that there is a sequence of edge switches that preserve both $\mathbf{d}$ and $\mathbf{D}$ and make the supergraph connected.
\end{proof}

\section{Edge-switching MCMC methods}
\label{apd:mcmc}

In this appendix we formulate an edge-switching MCMC method for sampling graphs, and show that it samples uniformly. One must take great care when formulating such a method, as uniformity hinges on seemingly small details of the algorithm. In our experience, similar methods are frequently implemented incorrectly, resulting in non-uniform sampling.

The algorithm for propagating the Markov chain is as follows:
\begin{enumerate}[(1)]
\item
Choose two distinct edges $(a,b)$ and $(c,d)$ of the graph $G$, each uniformly at random.
\item
Switch the edge-pair into either $(a,c),\, (b,d)$, or $(a,d),\, (b,c)$ (\myref{figure}{fig:edge-switch}), with probability $\frac{1}{2}$ each, obtaining the graph $G'$. The edge switch may potentially create self-loops (if the edges were not disjoint) or multi-edges.
\item
If $G'$ satisfies our chosen constraints (e.g.\ simple graph, loopless multigraph, connected graph, etc.), choose it as the next state of the Markov chain. Otherwise, the next state will be $G$, the same as the current one.
\end{enumerate}
In order for sampling to be uniform, it is necessary that the Markov chain be irreducible, i.e.\ that all graphs are reachable with edge switches that preserve our chosen constraints. This is true for simple graphs,
loopless multigraphs,
as well as for the connected version of both \cite{Taylor1981}.
Notice that detailed balance is fulfilled because edge switches are reversible. Finally, the number of available edge-pair choices in step~(1) is the same in all states, therefore each choice in each state is made with the same probability. This implies that the transition probability matrix of the Markov chain is symmetric, ensuring that its stationary distribution is uniform.

We note that if in step~(1) only those edge-pairs are selected whose switching would maintain the constraints (i.e.\ keep the graph simple or loopless), preventing the Markov chain from staying in the same state for more than one step, then the sampling may not be uniform. For example, one may be tempted to only choose disjoint pairs of edges, as this would avoid the creation of self-loops and speed up the procedure. It turns out that with this choice, the sampling of simple graphs would still be uniform, but the sampling of loopless multigraphs would not. This is because the number of disjoint edge-pairs in a simple graph only depends on its degree sequence, and can be computed as the total number of edge-pairs minus the number of non-disjoint edge-pairs at each vertex:
\[
\binom{{\frac{1}{2}\sum_i d_i}}{2} -
\sum_{i=1}^n \binom{d_i}{2}.
\]
However, this is no longer true for loopless multigraphs, in which two distinct vertices may be incident to the same (parallel) edge-pair. In a loopless multigraph, the number of disjoint edge-pairs depends not only on the graph's degree sequence, but also on its specific structure. The transition matrix is no longer symmetric in general, and the sampling is not uniform. If in step~(1) we entirely avoid edge-switches that would create a non-simple graph, then the sampling of simple graphs would not be uniform either.

\section{Efficient implementation of connectivity tracking}
\label{apd:conn}

During the process of graph construction that we use for sampling connected graphs, we must track the evolution of connected components, i.e.\ that of the ``supernodes'' and their degrees. Here we describe a possible efficient implementation of connectivity tracking.  

To maintain the supergraph's degree sequence, we need to be able to efficiently check if two vertices belong to the same supernode, i.e.\ if they are in the same connected component, at any point during the graph construction. This information can be encoded into a rooted forest whose nodes are identical to the vertices of our graph. Each tree in the forest corresponds to a component. To determine if two vertices are in the same component, we can simply find the roots of their trees and check if they are the same. This way, each supernode will be represented by a tree root, and any associated information (such as the supernode's degree) can be stored directly in the root. When two vertices of the graph are connected to each other, the forest is updated as follows: If the two vertices were in the same component, nothing needs to be done. If they were in different components, then the two corresponding trees are joined by connecting their roots, and designating one of them as the root of the newly created larger tree.

Finding which component (supernode) a vertex belongs to requires walking up its tree until the root is reached. This takes as many operations as the distance of that vertex from the root. To speed up subsequent queries, we take all vertices on the path from the original vertex to the root, and make them direct descendants of the root. This makes checking if two vertices are in the same component a constant-time operation \emph{on average}. To see why, consider that while building the graph, the component of each vertex will be checked before adding a connection. This reduces the depth of each tree in the forest to one. The subsequent joining of trees can thus never create a tree depth greater than two, i.e.\ no component check will take more than two operations.

\section{A heuristic for weighting the branches of the decision tree}
\label{apd:heur}

We employ two simple heuristics to reduce the bias of the sampling distribution: (1)~re-ordering the degree sequence and (2)~choosing branches of the decision tree non-uniformly. 

Note that the structure of the decision tree depends on the order in which vertices are connected up, i.e.\ the ordering of the degree sequence. We observed empirically that when using the connectedness constraint, an increasing ordering of degrees produces a narrower weight distribution, i.e.\ results in ``more uniform'' sampling. This is consistent with the intuition described in \myref{section}{sec:connected}: connecting small degree vertices to larger degree ones favours creating a connected graph.

\begin{figure}[bt]
   \centering
   \includegraphics{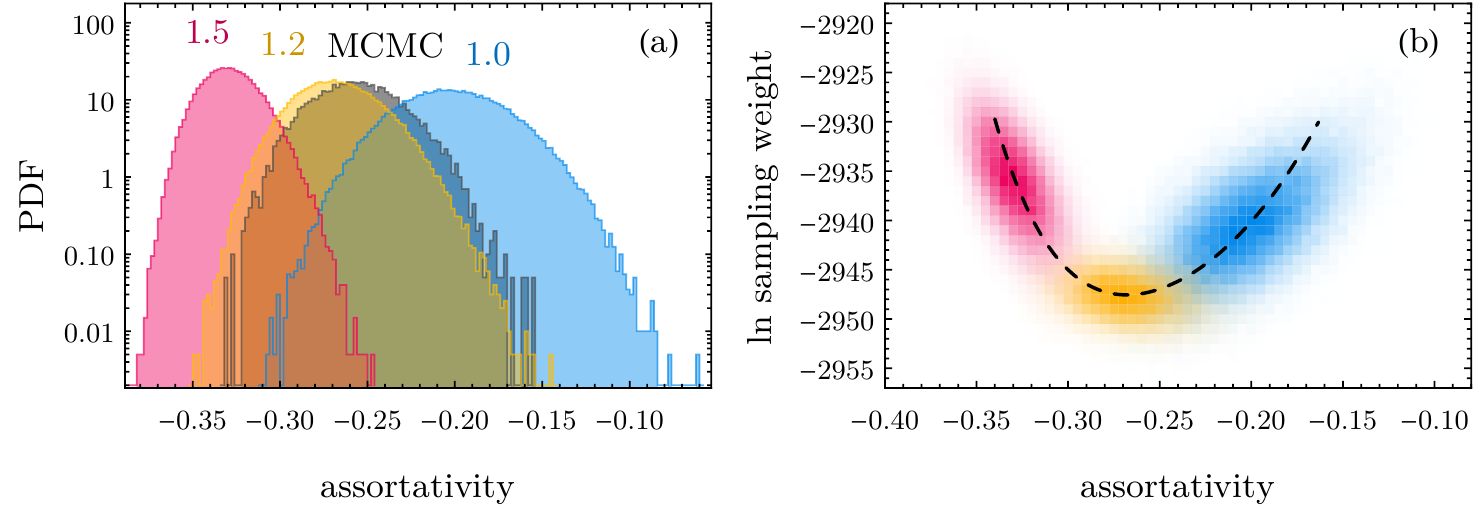} 
   \caption{\textbf{(a)}~Biased distributions of assortativity values obtained with the connected graph sampler for $\alpha=1.0, 1.2, 1.5$ (blue, yellow, red; 100\,000 samples), compared with the unbiased samples obtained with an MCMC sampler (grey; 10\,000 samples). Note the logarithmic vertical scale. \textbf{(b)}~Coloured patches represent the joint distribution of assortativity values and the natural logarithms of sampling weights. The dashed black line is the path traced by the mean of these distributions when varying $\alpha$ between 0.9 and 1.6.}
   \label{fig:apd}
\end{figure}

In the most basic version of the sampling algorithm, each feasible branch of the decision tree is chosen with the same probability, i.e.\ allowed stubs are picked uniformly. This is equivalent to picking vertices with probability proportional to their degrees, $d$. We introduce a simple one-parameter heuristic to choose decision branches non-uniformly: pick vertices with probability proportional to $d^\alpha$, or equivalently, pick stubs with probability proportional to $d^{\alpha-1}$. The parameter $\alpha$ effectively tunes the affinity of connecting to high- versus low-degree vertices. $\alpha=1$ corresponds to uniform stub choice. This choice of weighting the branches of the decision tree is purely heuristic, and is motivated both by its simplicity and the observation that both graphicality and connectedness are affected by a preference to choose larger or small degrees (see sections \ref{sec:connected} and \ref{sec:biased}).  For a more detailed exploration of branch weighting, see \cite{BasslerNew}.

The parameter $\alpha$ must be adjusted to reduce the bias of the sampler as much as possible. We do this based on the observation that the bias manifests itself in two important ways. First, the distribution of the sampling weights (\myrefp{figure}{fig:example}{d}) has a large variance. If sampling were uniform, its variance would be zero. Therefore, $\alpha$ could be chosen so as to minimize the variance of the sampling weight distribution. Second, when measuring a certain graph property such as assortativity, the \emph{biased} sampler may produce property values that should be common with a vanishingly low probability. \myrefp{Figure}{fig:apd}{a} shows the biased distributions of assortativity values obtained with various different choices of $\alpha$ (blue, yellow, red) and compares it to the values obtained with a non-biased MCMC sampler (grey). Notice that the biased distribution obtained with $\alpha=1$ (blue) overlaps with the non-biased one only partially, and, for the sample size used here, includes almost no values lower than $-0.30$. Therefore, the bias cannot be effectively corrected without increasing the sample size significantly. However, the range of values frequently produced by the biased sampler may be adjusted through $\alpha$: increasing $\alpha$ shifts assortativity values to a lower range (\myrefp{figure}{fig:apd}{a}, red and yellow). In the spirit of importance sampling, we choose $\alpha$ to sample ``important'' values with high probability, i.e.\ maximize the overlap of the biased distribution with the non-biased one, and thus minimize the amount of bias correction that is necessary. How can this be achieved without knowing the non-biased distribution a priori? Notice that bias correction will cause a shift in the range of values only if there is a correlation between the values and the sampling weights. \myrefp{Figure}{fig:apd}{b} shows their joint distributions: the correlation is negative for $\alpha=1.5$, positive for $\alpha=1.0$ and mostly vanishes for $\alpha=1.2$. When the distributions are unimodal, as is typically the case, the lowest correlation can be achieved by minimizing the mean logarithmic sampling weight, i.e.\ finding the minimum of the black dashed curve in \myrefp{figure}{fig:apd}{b}. Notice that this may be done without reference to any particular graph measure, such as assortativity. In the examples considered here, we observed that minimizing the mean of the logarithmic sampling weights also reduced their variance.  In summary, minimizing either the variance or the mean of the logarithmic sampling weight distribution are practical ways to improve the performance of the sampling method. 

For all examples presented here, we used the Kiefer--Wolfowitz stochastic ap\-prox\-i\-mation algorithm to find the optimal $\alpha$. The $\alpha$ values used for \myref{figure}{fig:example} were $1.200$ when sampling from the connected realizations of the degree sequence and $1.107$ when sampling from all realizations. The degree sequence was ordered increasingly in both cases.

\section*{References}

\raggedright
\bibliography{degseq}
\bibliographystyle{iopart-num} 

\end{document}